\pgfplotsset{
width=10cm,
compat=1.9}
\tikzset{every picture/.style=semithick,
}
\newcommand{\MC}[2]{\mbox{\sf \small MC[#1, #2{}]}\xspace}
\newcommand{\LTL}{\mbox{LTL}\xspace}
\newcommand{\HyperLTL}{\mbox{HyperLTL}\xspace}
\newcommand{\alphabet}{\mathrm{\Sigma}}
\newcommand{\states}{\mathrm{\Sigma}}
\newcommand{\statespace}{\states}
\newcommand{\Trace}{\mathsf{Traces}}
\newcommand{\trace}{t}
\newcommand{\qtrace}{\eta}
\newcommand{\AP}{\mathsf{AP}}
\newcommand{\Next}{\X}
\newcommand{\Globally}{\G}
\newcommand{\V}{\mathcal{V}}
\newcommand{\States}{S}
\newcommand{\state}{s}
\newcommand{\trans}{\delta}
\newcommand{\kframe}{\mathcal{F}}
\newcommand{\krip}{\mathcal{K}}
\newcommand{\ktuple}{\langle S, s_\init, \trans, L \rangle}
\newcommand{\init}{\mathit{init}}
\newcommand{\tru}{\mathtt{true}}
\newcommand{\fals}{\mathtt{false}}
\newcommand{\quant}{\mathbb{Q}}
\newcommand{\comp}[1]{\textsf{\small #1}\xspace}
\newcommand{\GMNI}{\textsf{\small GMNI}\xspace}
\newcommand{\GNI}{\textsf{\small GNI}\xspace}
\newcommand\donotshow[1]{}
\newcommand{\F}{\LTLdiamond}
\newcommand{\G}{\LTLsquare}
\newcommand{\U}{\,\mathcal U\,}
\newcommand{\X}{\LTLcircle}
\newcommand{\Waitfor}{\,\mathcal W\,}
\newcommand{\suffix}[2]{#1[#2,\infty]}
\newcommand{\dec}{\mathit{dec}}
\newcommand{\ses}{\mathit{ses}}
\newtheorem{theorem}{Theorem}
\newtheorem{definition}{Definition}
\newtheorem{corollary}{Corollary}
\newcommand{\qed}{}
\begin{document}

\title{The Complexity of Monitoring Hyperproperties}



\author{
\IEEEauthorblockN{Borzoo Bonakdarpour}
\IEEEauthorblockA{Department of Computer Science\\
Iowa State University, USA\\
Email: \sf{borzoo@iastate.edu}}
\and
\IEEEauthorblockN{Bernd Finkbeiner}
\IEEEauthorblockA{Reactive Systems Group\\
Saarland University, Germany\\
Email: \sf{finkbeiner@cs.uni-saarland.de}}
}

\maketitle

\begin{abstract}

We study the runtime verification of hyperproperties, expressed in the
temporal logic HyperLTL, as a means to inspect a system with respect
to security polices. Runtime monitors for hyperproperties analyze
trace logs that are organized by common prefixes in the form of a
tree-shaped Kripke structure, or are organized both by common prefixes
and by common suffixes in the form of an acyclic Kripke structure.
Unlike runtime verification techniques for trace properties, where the
monitor tracks the state of the specification but usually does not
need to store traces, a monitor for hyperproperties repeatedly model
checks the growing Kripke structure. This calls for a rigorous
complexity analysis of the model checking problem over tree-shaped and
acyclic Kripke structures.

We show that for {\em trees}, the complexity in the size of the Kripke 
structure is \comp{L-complete} independently of the number of quantifier 
alternations in the HyperLTL formula. For {\em acyclic} Kripke structures, the 
complexity is \comp{PSPACE-complete} (in the level of the polynomial hierarchy 
that corresponds to the number of quantifier alternations). The combined 
complexity in the size of the Kripke structure and the length of the HyperLTL 
formula is \comp{PSPACE-complete} for both trees and acyclic Kripke structures, 
and is as low as \comp{NC} for the relevant case of trees and alternation-free 
HyperLTL formulas. Thus, the size and shape of both the Kripke structure and
the formula have significant impact on the complexity of the model checking problem.

\end{abstract} 

\section{Introduction}
\label{sec:intro}

Most security properties related to confidentiality and information flow cannot 
be formulated as trace properties because they relate multiple computations. 
For example, \emph{observational determinism}~\cite{zm03} is satisfied if on 
every pair of computation traces where the observable inputs are the same, 
also the observable outputs are the same. This class of secure information 
flow policies has been characterized in a set-theoretic 
framework called \emph{hyperproperties}~\cite{cs10}. Hyperproperties can be 
expressed in the temporal logic HyperLTL~\cite{cfkmrs14}, which extends the 
linear-time temporal logic (LTL)~\cite{p77} with trace quantifiers and trace 
variables. Suppose, for example, that the observable input to a system is the 
atomic proposition~$i$ and the output is the atomic proposition~$o$. 
Observational determinism can then be expressed as the HyperLTL 
formula
\[
\varphi_{\mathsf{obs}} = \forall \pi.\ \forall \pi'.\ \G\, (i_\pi 
\Leftrightarrow i_{\pi'})\, 
\Rightarrow\,  \G\, (o_\pi \Leftrightarrow o_{\pi'}),
\]
where $\G$ is the usual ``globally'' operator of temporal logic: if two traces 
$\pi$ and $\pi'$ agree globally on $i$, then they must also globally
agree on $o$.

\begin{figure}
\centering
\subfigure[Set of linear traces.]{
    \includegraphics[scale=.55]{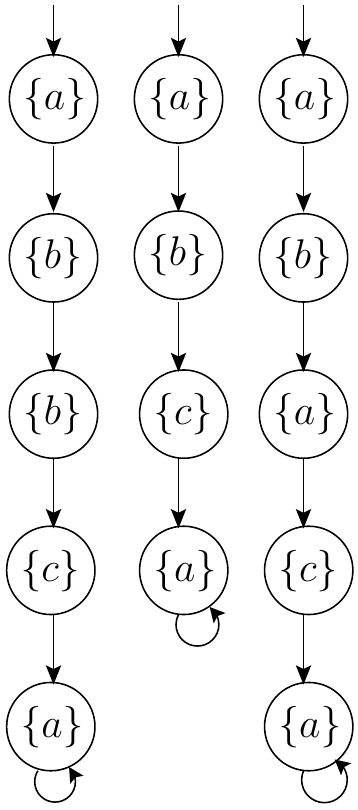}
    \label{fig:subfig1}
}
\hfill
\subfigure[Equivalent tree-shaped Kripke structure.]{
    \includegraphics[scale=.55]{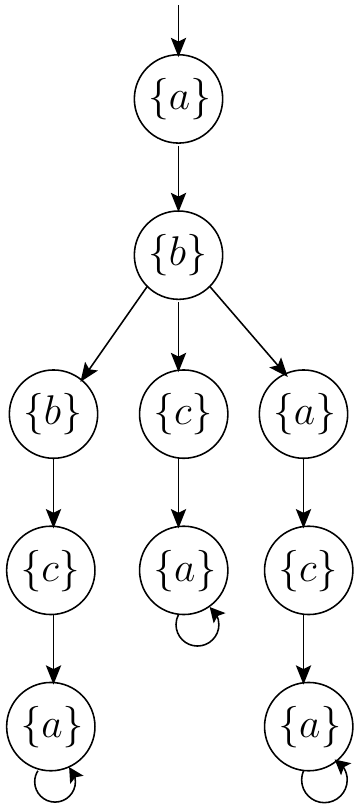}
    \label{fig:subfig2}
}
\hfill
\subfigure[Equivalent acyclic Kripke structure.]{
    \includegraphics[scale=.55]{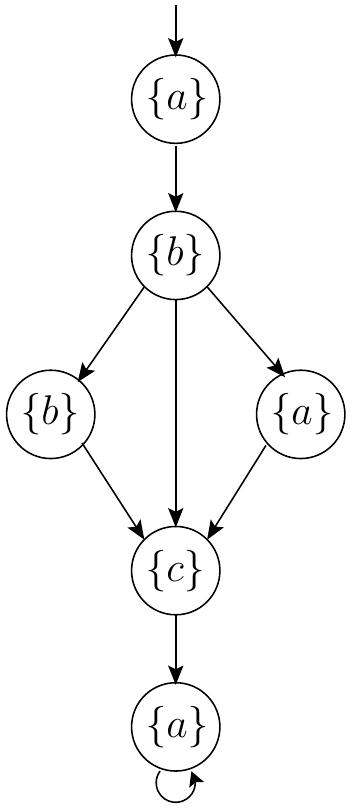}
    \label{fig:subfig3}
}
\caption{A trace log example and its assembly into space-efficient tree-shaped 
and acyclic Kripke structures.}
\label{fig:ks-shapes}
\end{figure}

{\em Runtime verification} is a technique that inspects the
health of a system by evaluating execution traces collected at run
time. Existing runtime verification techniques (e.g.,~\cite{gh01,klsss02,10.1007/978-3-642-04694-0_5,bls11})
evaluate a linear finite trace $t$ against a formula $\varphi$
expressed in a trace-based language such as LTL or regular
expressions.  Monitors for trace-based languages typically do not need
to record traces that are already evaluated. By contrast, a monitor
for hyperproperties must store a set $T$ of traces seen so far and
repeatedly check this growing set against the specification
(cf.~\cite{ab16,bsb17,fhst17}). For example, to monitor observational
determinism $\varphi_{\mathsf{obs}}$, the monitor has to examine every
existing {\em pair} of traces at all times and, hence, has to keep the
pairs that are already evaluated in a trace log. These trace logs may
be in the form of a simple linear collection of the traces seen so far
or, for space efficiency, organized by common prefixes and assembled
into a {\em tree-shaped} Kripke structure or by common prefixes as well as
suffixes assembled into an {\em acyclic} Kripke structure (see
Fig.~\ref{fig:ks-shapes}). Moreover, as a runtime monitor for
hyperproperties observes the execution traces while new traces (say
$T'$) are produced by the running system over time, the monitor has to
evaluate $\varphi$ with respect to $T \cup T'$ due to inter-trace assertions in
$\varphi$.  Over time, the size of the Kripke structure that
represents $T \cup T'$ may grow and its shape may change. Thus, a
fundamental research question is to study the complexity of the model checking problem for HyperLTL as the trace log grows over
time. For LTL, the complexity of the model checking problem for restricted Kripke
structures is known~\cite{kb11}. In particular, the model checking
problem is \comp{PSPACE-hard} (in the size of the formula) only if
there exists a strongly connected component with two distinct
cycles. For acyclic Kripke structures, the model checking problem is in
\comp{coNP}. If, additionally, the verification problem can be
decomposed into a polynomial number of finite path checking problems,
for example, if the Kripke structure is a tree or a directed graph
with constant depth, then the complexity reduces further to \comp{NC}.
Prior to our work, the complexity of the model checking
problem for hyperproperties over restricted Kripke structures was an open question.

\def\arraystretch{3}
\begin{table*}[t]
\centering
\scalebox{0.93}{\newcolumntype{K}[1]{>{\centering\arraybackslash}p{#1}}
\begin{tabular}{|K{2cm}||K{3cm}||K{3cm}|K{.5cm}||K{4.5cm}|}
\hline

& \multicolumn{3}{c||}{\cellcolor{black!15} \normalsize This paper} & \\
\cline{2-4} & {\bf Tree} & \multicolumn{2}{c||}{\bf Acyclic} & {\bf General}\\

\hline\hline
$\forall^+/\exists^+$ & \multirow{5}{*}{\em (Theorem~\ref{thm:Lcomp})} & 
\multicolumn{2}{c||}{\parbox[c]{2cm}{\centering \comp{NL-complete} \\
{\em (Theorem~\ref{thrm:af-flat})} }} & \comp{NL-complete}~\cite{frs15}\\
\cline{1-1}\cline{3-5}

$\exists^+\forall^+$/$\forall^+\exists^+$ &  \comp{L-complete} & 
\comp{NP/coNP-complete} & \multirow{3}{*}{\rotatebox[origin=c]{90}{\em 
(Theorem~\ref{thrm:eak-flat})}} & \comp{PSPACE-complete}~\cite{cfkmrs14}\\
\cline{1-1}\cline{3-3}\cline{5-5}

\multirow{3}{*}{$(\forall^*\exists^*)^*$}  &  & 
\comp{$\mathsf{\Pi}_{k}^p$-complete} & &
\multirow{2}{*}{\comp{$(k{-}1)$-EXPSPACE-complete}~\cite{markus}}\\
\cline{3-3}

 &  & $\mathsf{\Sigma}_{k}^p$\comp{-complete} & & \\
\cline{3-4}\cline{5-5}

 &  & 
\multicolumn{2}{c||}{\parbox[c]{3cm}{\centering \comp{PSPACE-complete} \\ {\em 
(Corollary~\ref{cor:aestar-s}) }}} & 
\comp{NONELEMENTARY}~\cite{cfkmrs14}\\
\hline

\end{tabular}
}
\vspace{2mm}
\caption{Complexity of the HyperLTL model checking problem in the size of 
the Kripke structure, where $k$ is the number of quantifier alternations in 
$(\forall^*\exists^*)^*$.}
\label{tab:system}
\end{table*}

\def\arraystretch{3.2}
\begin{table*}[t]
\centering
\scalebox{0.93}{\newcolumntype{K}[1]{>{\centering\arraybackslash}p{#1}}
\begin{tabular}{|K{2cm}||K{3cm}||K{3cm}|K{.5cm}||K{4.5cm}|}
    \hline

& \multicolumn{3}{c||}{\cellcolor{black!15} \normalsize This paper} & \\
\cline{2-4}
& {\bf Tree} & \multicolumn{2}{c||}{\bf Acyclic} & {\bf General}\\
\hline\hline

$\exists^k/\forall^k$ & \parbox[c]{2cm}{\centering
\comp{NC} \\ {\em (Theorem~\ref{thrm:aftree-c})} } & 
\multirow{2}{*}{\comp{NP/coNP-complete}} & 
\multirow{3}{*}{\rotatebox[origin=c]{90}{
\em (Theorem~\ref{thrm:aektree-c})}} & 
\multirow{2}{*}{\comp{PSPACE-complete}~\cite{frs15}}\\
\cline{1-2}

$\exists^+/\forall^+$ & \parbox[c]{3cm}{\centering
\comp{NP/coNP-complete} \\ {\em (Theorem~\ref{thrm:eak-tree})} } & & & \\
\cline{1-3}\cline{5-5}

$\exists\forall$/$\forall\exists$ & \parbox[c]{2cm}{\centering
\comp{NC} \\ {\em (Theorem~\ref{thrm:aetree-c})} } & 
\comp{$\mathsf{\Sigma^p_2/\Pi^p_2}$-complete} & 
& \comp{EXPSPACE-complete}~\cite{cfkmrs14}\\
\hline

\multirow{3}{*}{$(\forall^*\exists^*)^*$} & 
\multicolumn{2}{c|}{\comp{$\mathsf{\Pi}_{k+1}^p$-complete}}  & 
\multirow{2}{*}{\rotatebox[origin=c]{90}{
\em (Theorems~~\ref{thrm:eak-tree},\ref{thrm:aektree-c})}} &
\multirow{2}{*}{\comp{$k$-EXPSPACE-complete}~\cite{cfkmrs14}}\\
\cline{2-3}

 & \multicolumn{2}{c|}{\comp{$\mathsf{\Sigma}_{k+1}^p$-complete}} & & \\
\cline{2-5}

 & \multicolumn{2}{c}{\parbox[c]{3cm}{\centering \comp{PSPACE-complete} \\ {\em 
(Corollaries~\ref{cor:hltl-tree},\ref{cor:hltl-acyc})} }} & & 
\comp{NONELEMENTARY}~\cite{cfkmrs14}\\
\hline

\end{tabular}
}
\vspace{2mm}
\caption{Complexity of the HyperLTL model checking problem in the 
combined input, consisting of the Kripke structure and the HyperLTL formula, 
where $k$ is the number of quantifier alternations in $(\forall^*\exists^*)^*$.}
\label{tab:combined}
\end{table*}

\subsection{Contributions}

With this motivation, we study, in this paper, the impact of structural 
constraints on the complexity of the model checking problem for HyperLTL. As mentioned 
earlier, we are interested in Kripke structures that are tree-shaped or acyclic 
as two appropriate shapes to store execution trace logs. With respect to the 
HyperLTL formula, we are interested in the impact of the quantifier structure. 
Tables~\ref{tab:system} and \ref{tab:combined} summarize our new complexity 
results, contrasted with the known results for general Kripke 
structures~\cite{cfkmrs14,frs15,markus}, related to the equivalent model 
checking problem. Table~\ref{tab:system} shows the complexity of the model checking
problem in terms of the size of the Kripke structure alone. This \emph{system 
complexity} is often the most relevant complexity in practice, because the
system tends to be much larger than the specification. This is in particular 
true in runtime verification, where the Kripke structure that records the traces
seen so far grows over time, while the temporal formula remains the same.
Table~\ref{tab:combined} shows the \emph{combined complexity} in the full input, 
consisting of both the Kripke structure and the HyperLTL formula. Our results 
show that the shape of the Kripke structure plays a crucial role in the 
complexity of the model checking problem:

\begin{itemize}
\item {\bf Trees.} \ For trees, the complexity in the size of the Kripke 
structure is \comp{L-complete} independently of the number of quantifier 
alternations. The combined complexity in the size of the Kripke structure and 
the length of the HyperLTL formula is \comp{PSPACE-complete} (in the level of 
the polynomial hierarchy that corresponds to the number of quantifier 
alternations) and is as low as \comp{NC} for alternation-free fragment as well 
formulas of the form $\exists\forall$ and $\forall\exists$. 

\item {\bf Acyclic graphs.} \ For acyclic Kripke structures, the complexity is 
\comp{NL-complete} for the alternation-free fragment and is 
\comp{PSPACE-complete} for alternating formulas (in the level of the polynomial 
hierarchy that corresponds to the number of quantifier alternations). The 
combined complexity in the size of the Kripke structure and the length of the 
HyperLTL formula is also \comp{PSPACE-complete} in the level of the polynomial 
hierarchy that corresponds to the number of quantifier alternations.

\end{itemize}

\subsection{Significance of Contributions}

The significance of our results is multifold:

\begin{itemize}

\item Our results are in sharp contrast to the undecidability result 
of~\cite{fh16} and the non-elementary complexity of~\cite{cfkmrs14}, which 
has commonly been interpreted as suggesting that only the alternation-free fragment is worth considering in 
practical settings. Our results show that there is a lot that can be 
done for hyperproperties with alternations without exceeding \comp{PSPACE}.

\item An important observation from Tables~\ref{tab:system} 
and~\ref{tab:combined} is the impact of the shape of the Kripke structure and 
type of formula on the complexity. For example, the HyperLTL formula for Goguen 
and Meseguer's non-interference policy~\cite{gm82} is alternation-free for 
deterministic systems, while the same policy in a non-deterministic setting is 
of the form $\forall\forall\exists.\psi$, hence, one alternation. This changes 
the complexity from \comp{NL-complete} to \comp{coNP-complete} in acyclic 
graphs, while it remains \comp{L-complete} for trees. This shows
that there are trade offs, both in the choice of the shape of the trace logs and in the formula that 
represents the policy, with signficant practical implications. We will
present a more detailed motivating example on these trade offs in 
Section~\ref{sec:motive}.  

\item As discussed in~\cite{fhst17,bsb17}, monitoring hyperproperties may depend 
on the entire set of traces seen so far. This implies that a dependency on the 
total length and number of the traces is unavoidable. Having said that, our 
\comp{L-completeness} result for monitoring trees shows that the dependency in the total length is actually only
logarithmic. Also, if the complexity is measured in the length of the traces 
and the formula, our \comp{PSPACE-completeness} result shows that monitoring 
can be accomplished with a linear number of instances of the incremental 
traces. 

\item Our results are also of interest in the context of classic model checking.
In the
restricted Kripke structures, leaves in trees and acyclic graphs are defined 
to have self-loops, which encode infinite traces. Our results thus
have two applications: (1) classic model checking of restricted Kripke 
structures with infinite traces, and (2) runtime verification of a collected or 
evolving set of finite traces. Tree-shape and acyclic Kripke structures often 
occur as the natural representation of the state space of some protocols. For 
example, certain security protocols, such as authentication and session-based 
protocols (e.g., TLS, SSL, SIP) go through a finite sequence of \emph{phases}, 
resulting in an acyclic Kripke structure. The advantage of model checking 
restricted structures is particularly strong for HyperLTL formulas with many 
quantifier alternations: while the model checking problem over general Kripke 
structures cannot be solved by any elementary recursive 
function~\cite{cfkmrs14,frs15,markus}, the model checking problem for 
trees and acyclic graphs is in \comp{PSPACE}. The complexity in the size of a 
tree-shaped Kripke structure is even just \comp{L-complete}.

\end{itemize}
In a nutshell, we believe that the results in this paper provide the 
fundamental understanding of the runtime verification problem for secure 
information flow and pave the way for further research on efficient 
and scalable monitoring techniques.

\paragraph*{Organization} The remainder of this paper is organized as follows.
In Section~\ref{sec:prelim}, we review Kripke structures and HyperLTL. We 
present a detailed motivating example in Section~\ref{sec:motive}.
Section~\ref{sec:system} presents our results on the complexity of HyperLTL model checking 
in the size of the Kripke structure. Section~\ref{sec:combined} presents the 
results on the complexity in the combined input consisting of both 
the Kripke structure and the HyperLTL formula. We discuss related work in 
Section~\ref{sec:related}. Finally, we make concluding remarks in 
Section~\ref{sec:conclusion}.   

\section{Preliminaries}
\label{sec:prelim}

We begin with a quick review of Kripke structures and HyperLTL.

\subsection{Kripke Structures}
\label{subsec:krip}

Let $\AP$ be a finite set of {\em atomic propositions} and $\alphabet = 2^{\AP}$ 
be the {\em alphabet}. A {\em letter} is an element of $\alphabet$. A 
\emph{trace} $t$ over alphabet $\Sigma$ is an infinite sequence of letters in 
$\Sigma^\omega$:
$$t = t(0)t(1)t(2) \cdots~.$$


\begin{definition}
\label{def:kripke}

A {\em Kripke structure} is a tuple
$$\krip = \ktuple,$$
where 

\begin{itemize}
 \item $\States$ is a finite set of states;
 \item $\state_{\init} \in \States$ is the initial state;
 \item $\trans \subseteq \States \times \States$ is a transition relation, and 
 \item $L: S \rightarrow \statespace$ is a labeling function on the states of 
$\krip$.
\end{itemize}
We require that 
for each $\state \in \States$, there exists $\state' \in \States$, such 
that $(\state, \state') \in \trans$.

\end{definition}

For example, in Fig.~\ref{fig:kripke}, we have that $L(s_{init})= \{a\}, 
L(s_3)=\{b\}$, etc. The \emph{size} of the Kripke structure is the number of 
its states. The directed graph $\kframe = \langle \States, \trans \rangle$ is 
called the {\em Kripke frame} of the Kripke structure $\krip$. A {\em loop} in 
$\kframe$ is a finite sequence $\state_0\state_1\cdots \state_n$, such that 
$(\state_i, \state_{i+1}) \in \trans$, for all $0 \leq i < n$, and $(\state_n, 
\state_0) \in \trans$. We call a Kripke frame {\em acyclic}, if the only loops 
are self-loops on terminal states, i.e., on states that have no other outgoing 
transition. See Fig.~\ref{fig:kripke} for an example. Since 
Definition~\ref{def:kripke} does not allow terminal states, we only consider 
acyclic Kripke structures with such added self-loops.

We call a Kripke frame \emph{tree-shaped}, or, in short, a \emph{tree}, if 
every state $\state$ has a unique state $\state'$ with $(\state', \state) \in 
\trans$, except for the root node, which has no predecessor, and the leaf nodes, 
which, again because of Definition~\ref{def:kripke}, additionally have a 
self-loop but no other outgoing transitions.

A \emph{path} of a Kripke structure is an infinite sequence of states
$$\state(0)\state(1)\cdots \in \States^\omega,$$
such that:

\begin{itemize}

\item $\state(0) = \state_\init$, and

\item $(\state(i), \state({i+1})) \in \trans$, for all $i \geq 0$.

\end{itemize}
A trace of a Kripke structure is a trace $t(0)t(1)t(2) \cdots \in \Sigma^\omega$ 
such that there exists a path $\state(0)\state(1)\cdots \in \States^\omega$ with 
$t(i) = L(\state(i))$ for all $i\geq 0$. We denote by $\Trace(\krip, \state)$ 
the set of all traces of $\krip$ with paths that start in state $\state \in 
\States$.

In the context of monitoring, we assume that traces of a system under inspection 
are given as a tree-shaped or acyclic Kripke structure. These type of Kripke 
frames are obviously more space efficient than a set of linear traces, because
trees allow us to organize the traces according to common prefixes and acyclic graphs according to both common prefixes and common suffixes.

\begin{figure}[t]
\centering
\begin{tikzpicture}[-,>=stealth',shorten >=.5pt,auto,node distance=2cm, 
semithick, initial text={}]

\node[initial, state] [text width=1em, text centered, minimum 
  height=2.5em](0) {\hspace*{-1.25mm}$\{a\}$};

\node [above left = 0.005 cm and 0.1 cm of 0](label){$s_{init}$};


\node[state, above right=of 0][text width=1em, text centered, minimum 
  height=2.5em] (1) {\hspace*{-1.25mm}$\{a\}$};

\node [above left = 0.005 cm and 0.1 cm of 1](label){$s_{1}$};

\node[state, right=of 1][text width=1em, text centered, minimum 
height=2.5em] (2) {\hspace*{-1.25mm}$\{b\}$};

\node [above left = 0.005 cm and 0.1 cm of 2](label){$s_{2}$};

\node[state, right=of 0][text width=1em, text centered, minimum 
height=2.5em] (3) {\hspace*{-1.25mm}$\{b\}$};

\node [above left = 0.005 cm and 0.1 cm of 3](label){$s_{3}$};

\draw[->]   
  (0) edge node (01 label) {} (1)
  (1) edge node (12 label) {} (3)
  (0) edge node (03 label) {} (3)
  (1) edge node (12 label) {} (2)
  (2) edge [loop right] node (22 label) {} (2)
  (3) edge [loop right] node (33 label) {} (3);
    
\end{tikzpicture}\caption {Example of an acyclic Kripke structure (with 
self-loops at otherwise terminal states).}
\label{fig:kripke}
\end{figure}
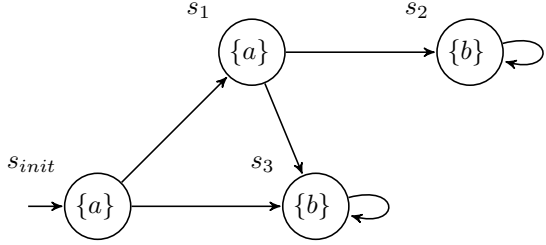

\subsection{HyperLTL}
\label{subsec:hltl}


%
\HyperLTL~\cite{cfkmrs14} is a temporal logic for expressing hyperproperties.
A hyperproperty~\cite{cs10} is a set of sets of execution traces.
\HyperLTL generalizes \LTL by allowing explicit quantification over multiple execution 
traces simultaneously. The set of \HyperLTL formulas is defined inductively by 
the following grammar:
\begin{equation*}
\begin{aligned}
& \varphi ::= \exists \pi . \varphi \mid \forall \pi. \varphi \mid \phi \\
& \phi ::= \tru \mid a_\pi \mid \lnot \phi \mid \phi \vee \phi \mid \phi \
\U \, \phi \mid \Next \phi 
    \end{aligned}
\end{equation*}
where $a \in \AP$ is an atomic proposition and $\pi$ is a {\em trace
variable} from an infinite supply of variables $\V$. The Boolean
connectives $\neg$ and $\vee$ have the usual meaning, $\U$ is the
temporal \emph{until} operator and $\X$ is the temporal \emph{next}
operator. We also consider the usual derived Boolean connectives, such
as $\wedge$, $\Rightarrow$, and $\Leftrightarrow$, and the derived
temporal operators \emph{eventually} $\F\varphi\equiv \tru
\,\U\,\varphi$, \emph{globally} $\G\varphi\equiv\neg\F\neg\varphi$,
and \emph{weak until}: $\varphi \Waitfor \psi \equiv (\varphi\, \U
\psi) \vee \G \varphi$. The quantified formulas $\exists \pi$ and
$\forall \pi$ are read as `along some trace $\pi$' and `along all
traces $\pi$', respectively.  A \emph{sentence} is a closed formula,
i.e., the formula that has no free trace variables. A formula with
only universal or only existential quantifiers is called
\emph{alternation-free}. Such formulas have \emph{alternation depth}
0.  The alternation depth of formulas with both existential and
universal quantifiers is the number of alternations from existential
to universal quantifiers and from universal to existential
quantifiers.

The semantics of HyperLTL is defined with respect to a trace assignment, a 
partial mapping~$\Pi \colon \V \rightarrow \Sigma^\omega$. The assignment with 
empty domain is denoted by $\Pi_\emptyset$. Given a trace assignment~$\Pi$, a 
trace variable~$\pi$, and a trace~$t$, we denote by $\Pi[\pi \rightarrow t]$ 
the assignment that coincides with $\Pi$ everywhere but at $\pi$, which is 
mapped to $t$. Furthermore, $\suffix{\Pi}{j}$ denotes the assignment mapping 
each trace~$\pi$ in $\Pi$'s domain to
$$\Pi(\pi)(j)\Pi(\pi)(j+1)\Pi(\pi)(j+2)\cdots~.$$
The satisfaction of a HyperLTL formula $\varphi$ over a trace assignment $\Pi$ 
and a set $T$ of traces, denoted by $T,\Pi \models \varphi$, is defined as 
follows:
\[
\begin{array}{l@{\hspace{1em}}c@{\hspace{1em}}l}
  T, \Pi \models a_\pi & \text{iff} & a \in \Pi(\pi)(0),\\
  T, \Pi \models \neg \phi & \text{iff} & T, \Pi \not\models \phi,\\
  T, \Pi \models \phi_1 \vee \phi_2 & \text{iff} & T, \Pi \models \phi_1 \text{ 
or } T, \Pi \models \phi_2,\\
  T, \Pi \models \X \phi & \mbox{iff} & T,\suffix{\Pi}{1} \models \phi,\\
  T, \Pi \models \phi_1 \U \phi_2 & \text{iff} &  \exists i \ge 0: 
T,\suffix{\Pi}{i} \models \phi_2 \; \wedge\ \vspace*{-0.25cm}\\
  & &  \forall j \in [0, i): T,\suffix{\Pi}{j} \models \phi_1,\\
  T, \Pi \models \exists \pi.\ \varphi & \text{iff} & \exists t \in T: 
T,\Pi[\pi \rightarrow t] \models \varphi,\\
  T, \Pi \models \forall \pi.\ \varphi & \text{iff} & \forall t \in T: 
T,\Pi[\pi\rightarrow t] \models \varphi.
  \end{array}
\]

We say that a set $T$ of traces satisfies a sentence~$\varphi$, denoted by $T 
\models \varphi$, if $T, \Pi_\emptyset \models \varphi$. A Kripke structure 
$\krip = \ktuple$ satisfies a HyperLTL formula $\varphi$, denoted by $\krip 
\models \varphi$, iff $\Trace(\krip,\state_{\init}) \models \varphi$.\\

\donotshow{
Let $\krip = \ktuple$ be a Kripke structure and $\Pi: \V \rightarrow 
\Trace^*(\krip, \state_\init)$ be a partial function mapping trace variables to 
suffixes of 
traces of $\krip$, where $\Pi[\pi \mapsto \trace]$ denotes the same 
function as $\Pi$, except that $\pi$ is mapped to trace $\trace$. We use 
$\Pi[i, \infty]$ for
the map that assigns to each trace variable the suffix 
$\Pi(\pi)[i, \infty]$. By $\qtrace$, we denote the most recently quantified 
trace and define the validity of a formula as follows:
\begin{equation*}
\begin{array}{llll}
\Pi  \models_\krip  a_{\pi}  & \text{iff} & a \in L\big(\Pi(\pi)(0)\big) \\
\Pi  \models_\krip  \lnot \phi & \text{iff } & \Pi \not \models_\krip \phi\\
\Pi  \models_\krip  \phi_1 \, \vee \, \phi_2 &   \text{iff } & (\Pi
\models_\krip \phi_1) \, \vee \, (\Pi \models_\krip \phi_2) \\
\Pi  \models_\krip  \Next \phi & \text{iff } & \Pi[1, \infty] \models_\krip 
\phi\\
\Pi  \models_\krip  \phi_1 \, \U \, \phi_2  ~~~~  &   \text{iff } &
\exists i \geq 0.\, \Big(\Pi[i, \infty] \models_\krip \phi_2 \; \wedge \;
 \forall j \in [0,  i). \, \Pi[j, \infty] \models_\krip \phi_1\Big) \\
\Pi  \models_T \exists \pi. \varphi & \text{iff ~~~~} &
\exists \trace \in \Trace(\krip, \Pi(\qtrace)(0)).\, \Pi [\pi \mapsto t, 
\qtrace \mapsto t] \models_\krip 
\varphi \\
\end{array}
\end{equation*}
For the empty assignment $\Pi = \{\}$, we denote $\Pi(\qtrace)(0)$ to yield 
the initial state. Validity on states of a Kripke structure $\krip$, written $s 
\models_\krip \varphi$, is defined as $\{\} \models_\krip \varphi$ A Kripke 
structure $\krip = \ktuple$ satisfies formula $\varphi$, denoted
with $\krip \models \varphi$ if and only of $s_0 \models_\krip \varphi$.
}

\noindent \emph{Example.} Consider the HyperLTL formula
$$\varphi =  \forall \pi_{1}. \forall \pi_{2}.\,  a_{\pi_1} \U b_{\pi_2}$$
and the Kripke structure in Fig.~\ref{fig:kripke}. The Kripke structure does not 
satisfy $\varphi$. For example, the trace assignment $\Pi$ that assigns to 
$\pi_1$ the trace $\{a\}\{b\}^\omega$  and to $\pi_2$ the trace  
$\{a\}\{a\}\{b\}^\omega$ does not satisfy $a_{\pi_1} \U b_{\pi_2}$.\\

Standard \emph{linear-time temporal logic} (LTL) is the fragment of HyperLTL 
with a single quantifier. Typically, the quantifier is universal and is left 
implicit, i.e., the LTL formula $\varphi = \forall \pi.\ \psi$ is written as 
$\psi$ with the index $\pi$ omitted from all atomic propositions. We say that a 
trace $t$ satisfies an LTL formula $\varphi$, denoted by $t \models \varphi$, if 
$\{t\} \models \varphi$.

We note that although our focus in this paper is on runtime verification 
(hence, a finite number of finite traces), for simplicity and without loss of 
generality, we use the infinite semantics of HyperLTL. To this end, we assume 
that the leaves of Kripke frames have self-loops that corresponds to the 
``stuttering'' semantics of finite-trace temporal logics.

\begin{figure*}[t]
\centering
\includegraphics[width=	\textwidth]{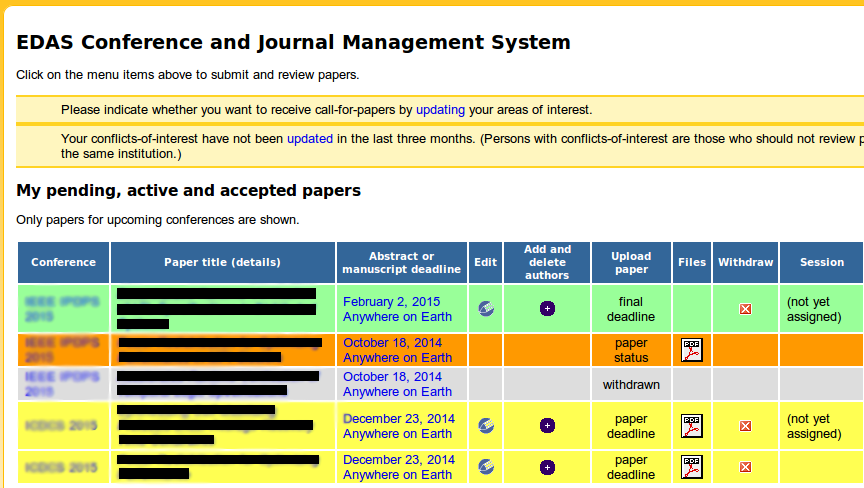}
\caption{EDAS conference management website's information leak.}
\label{fig:edas}
\end{figure*}

\section{Motivating Example}
\label{sec:motive}

\subsection{EDAS Conference Manager Bug}

\newcommand{\acc}{\mathsf{acc}}
\newcommand{\rej}{\mathsf{rej}}
\newcommand{\undec}{\mathsf{undec}}

We demonstrate the importance of the problem under investigation in this paper
with a real-life information leak encountered by the first author while using 
the EDAS Conference Management System\footnote{\url{http://www.edas.info}}. 
Fig.~\ref{fig:edas} shows an anonymized screenshot of
the EDAS web interface~\cite{ab16}. The color-coded table displays the status of 
submitted papers by the user: accepted (green), rejected (orange), withdrawn 
(grey), and pending (yellow). Now, consider the well-known Goguen and 
Meseguer's {\em non-interference} (\GMNI) security policy~\cite{gm82} for {\em 
deterministic} systems, where a low-privileged user (in this case, the author) 
should not be able to acquire any information about the activities (if any) of 
the high-privileged user (in this case, the conference PC chair). The HyperLTL 
formula for this policy in the context of our example is the following:
\begin{align*}
\varphi_{\mathsf{GMNI}} = \forall \pi. \forall \pi' . \Big(\Globally (\dec = 
\lambda)_{\pi'} \, \land \, & \Globally (\dec_{\pi} \neq \dec_{\pi'})\Big) \,
\Rightarrow\\
& \Globally\Big(\ses_{\pi} \Leftrightarrow \ses_{\pi'}\Big)
\end{align*}
where high input variable $\dec$ for a submission, ranging over $\{\acc, \rej, 
\undec\}$, contains the internal decision of the conference chair for the 
submission and low output proposition $\ses$ represents whether or not the 
submission is assigned to a session for presentation. By abuse of notation, we 
denote the value of variable $\dec$ in the associated state of trace $\pi$ by 
$\dec_\pi$. Finally, $\lambda$ denotes an arbitrary dummy value for the $\dec$. 

The web interface exhibits the following blunt violation of \GMNI, i.e., the 
author can learn the internal decision of the chair, while the status of the 
paper is pending. The first two rows show the status of two papers submitted to 
a conference after their notification (i.e., values sent on the low-observable 
channel): the first paper is accepted while the second is rejected. The last 
two rows show two other papers submitted to a different conference whose status 
are pending at the time the screenshot is taken. Although the authors should 
not be able to infer the internal decision making activities (i.e., high 
inputs) of the conference chair before the notification, this table leaks these 
activities as follows. When the chair sets $\dec = \acc$, the paper is supposed 
to be assigned to a session in the technical program, while a rejected paper 
(i.e., $\dec = \rej$) does not need to be assigned to a session. Now, by 
comparing the rows, one can observe that their `Session' column have the 
same value (i.e., `not yet assigned'). Likewise, the second and the last rows 
have an empty `Session' column. This simply means that the table reveals the 
internal status of the fourth and last papers as accepted and 
rejected, respectively, although their external status are pending. More 
specifically, in formula $\varphi_{\mathsf{GMNI}}$, if $\pi'$ and $\pi$ are 
instantiated by the last two yellow rows, respectively, then purging $\dec$ 
by $\lambda$ in $\pi'$ will result in different $\ses$ observations, which 
clearly is a violation of non-interference through the four independent 
executions to generate the HTML table rows\footnote{We note that EDAS has 
fixed this bug after we brought it to their attention.}.

\subsection{The Need for Runtime Monitoring}

The above example illustrates how a security policy can easily be 
violated due to a careless implementation, where the value of high variable 
$\dec$ flows in the publicly-observable variable $\ses$, although the chair did 
not take any inappropriate action that directly violates the security policy. 
This example demonstrates the need for designing techniques for monitoring the 
functional as well as security aspects of systems such as an online conference 
manager to inspect their health at run time or through periodic offline trace 
log analysis.

A key step in deploying any type of verification is identifying the 
specification of the system in terms of a formula. For our conference manager 
system, we now identify the specification in a sequence of steps starting from 
a simple formula, which is evolved into more complex ones: 

\begin{itemize}

\item If the specification is only concerned with monitoring non-interference in 
deterministic executions, then formula $\varphi_{\mathsf{GMNI}}$ suffices. In 
this case, according to Table~\ref{tab:system}, the complexity of monitoring a 
tree-shaped (respectively, acyclic) trace log is \comp{L-complete} 
(respectively, \comp{NL-complete}) in the size of the log.

\item Next, let us imagine, the generation of the HTML report is accomplished 
by a set of concurrent threads. In this case, we need to refine 
$\varphi_\mathsf{GMNI}$ to obtain a stronger notion of confidentiality known as 
the Generalized Non-interference (\GNI)~\cite{m88}, which permits 
nondeterminism in the behavior, but stipulates that low-security outputs may 
not be altered by the injection of high-security inputs:
\begin{align*}
\varphi_{\mathsf{GNI}} = \forall \pi. \forall \pi'. \exists \pi''. 
& \G(\dec_\pi = \dec_{\pi''}) \; \wedge \\
& \G (\ses_{\pi'} \Leftrightarrow \ses_{\pi''})
\end{align*}
The trace $\pi''$ is an interleaving of the high inputs trace $\pi$ and the low 
outputs of the trace $\pi'$. In this case, according to Table~\ref{tab:system}, 
the complexity of monitoring a tree-shaped (respectively, acyclic) trace log is 
\comp{L-complete} (respectively, \comp{coNP-complete}) in the size of the log. 
As can be seen, in case of asyclic trace logs, there is a significant jump in 
the complexity hierarchy of monitoring.

\item As mentioned earlier, the EDAS information leak was due to an 
implementation bug, rather than by a mistake by the chair. For the designer of 
the conference management system, this is an important distinction:

\begin{itemize}
\item Information leaks caused by an incorrect implementation should be fixed 
by eliminating the bug in the implementation, and

\item Information leaks caused by the user could be fixed by educating the user 
or by improving the user interface, for example by issuing  an explicit warning, 
or might not even need fixing if the information leak was intentional. 

\end{itemize}

In the next step, we will further refine the specification to \emph{only} 
refer to information leaks that are due to errors in the implementation, 
ignoring information leaks that are caused by the conference chair. For this 
purpose, we specify that a trace $\pi_1$ is OK (from the system 
implementation's point of view) even if $\pi_1$ results in a leak, as long as 
there exists a trace $\pi_2$, representing a different interaction of the chair 
with the system, that avoids the leak. In order to prevent trivial 
alternatives, such as ``do nothing'', we only consider alternative user 
behaviors that would accomplish the same objectives. Now, suppose that for two 
traces $\pi_1$ and $\pi_2$, the predicate $\mathit{obj}(\pi_1, \pi_2)$ 
indicates that $\pi_1$ and $\pi_2$ accomplish the same functional objectives, 
e.g., 
\begin{align*}
& \mathit{obj}(\pi_1, \pi_2)=\\
& \Big(\F (\dec = \acc)_{\pi_1} \, \Rightarrow \,\F (\dec = \acc)_{\pi_2}\Big) 
\; \wedge\\
&\Big(\F (\dec = \rej)_{\pi_1} \;\; \Rightarrow \,\F (\dec = \rej)_{\pi_2}\Big)
\end{align*}
i.e., if $\pi_1$ prescribes that a paper is accepted (respectively, 
rejected), then the same decision is made for the paper in $\pi_2$ as well. 
Then, our refined property is expressed by the following HyperLTL formula:
\begin{align*}
\varphi_{\mathsf{ref}} =  \forall \pi_1. \exists \pi_2. \forall \pi_3. \exists 
\pi_4. \; & obj(\pi_1, \pi_2) \; \wedge \\
& \G(\dec_{\pi_4} = \dec_{\pi_2}) \; \wedge \\
& \G (\ses_{\pi_4} \Leftrightarrow \ses_{\pi_3})
\end{align*}
The formula (with three quantifier alternations) expresses noninterference with 
the modification that the universally quantified trace $\pi_1$ is replaced by a 
existentially quantified trace $\pi_2$ that satisfies the same objectives. 
According to Table~\ref{tab:system}, the complexity of monitoring the refined 
property in a tree-shaped (respectively, acyclic) trace log is \comp{L-complete}
(respectively, \comp{$\mathsf{\Pi}_{4}^p$-complete}) in the size of the 
Kripke structure. If we did not allow non-determinism, which translates to 
removing the innermost existential quantifier and, hence, one less 
alternation in the formula quantifiers, the complexities would be 
\comp{L-complete} and \comp{$\mathsf{\Pi}_{3}^p$-complete}, respectively.

\end{itemize}

As can be seen in this example, the choice of the shape of the Kripke structure 
and the HyperLTL formula play a crucial role in the complexity of the model checking 
problem. This observation motivates rigorously investigating the complexity of 
RV for tree-shaped and acyclic Kripke structures. Our findings, summarized in 
Tables~\ref{tab:system} and~\ref{tab:combined}, are presented in detail in 
Sections~\ref{sec:system} and \ref{sec:combined}, respectively.

\section{System Complexity}
\label{sec:system}

In this section, we analyze the complexity of the model checking problem in the size of the 
Kripke structure. We use the following notation to distinguish the different 
variations of the problem:
\begin{center}
 \MC{\sf Fragment}{\mbox{\sf Frame Type}},
\end{center}
where

\begin{itemize}
 \item \comp{MC} is the model checking problem, i.e., the problem to 
determine whether or not $\krip \models \varphi$, where $\krip$ is a Kripke 
structure and $\varphi$ is a closed HyperLTL formula;
 
\item \comp{Fragment} is one of the following for $\varphi$:

\begin{itemize}
 \item \comp{AF-HyperLTL} refers to the alternation-free fragment of HyperLTL 
(i.e., $\exists^+\psi$ or $\forall^+\psi$);

\item \comp{(EA)$k$-HyperLTL}, for $k\geq 0$, denotes the fragment with $k$ 
alternations and a lead existential quantifier, where $k=0$ means an 
alternation-free formula with only existential quantifiers;

\item \comp{(AE)$k$-HyperLTL}, for $k\geq 0$, denotes the fragment with $k$ 
alternations and a lead universal quantifier, where $k=0$ means an 
alternation-free formula with only universal quantifiers,

\item \comp{HyperLTL} is the full logic HyperLTL, and

\end{itemize} 

\item \comp{Frame Type} is either \comp{tree}, \comp{acyclic}, or 
\comp{general}.

\end{itemize}


\subsection{Tree-shaped Graphs}

Our first result is that the model checking problem for tree-shaped Kripke structures is 
\comp{L-complete} in the size of the Kripke structures. This result is 
particularly interesting, as system trace logs are very often stored as a set 
of traces grouped by common prefixes. 



\begin{theorem}
  \label{thm:Lcomp}
\MC{HyperLTL}{\mbox{tree}} is \comp{L-complete} in the size of the Kripke structure.
\end{theorem}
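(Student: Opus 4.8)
The plan is to prove the two directions separately: membership in \comp{L} and \comp{L}-hardness, both measured in the size $|\krip|$ of the Kripke structure, with the formula $\varphi$ treated as fixed. For the upper bound I would first exploit the special shape of trees. In a tree the path from the root $s_\init$ to any leaf is unique, so a trace of $\krip$ is completely determined by the leaf in which it ends; hence $\Trace(\krip,s_\init)$ has at most $|\krip|$ elements, each nameable by an $O(\log|\krip|)$-bit pointer to its leaf. Because $\varphi$ is fixed, its quantifier prefix $Q_1\pi_1\cdots Q_n\pi_n.\,\psi$ has a constant number $n$ of quantifiers, so the entire quantifier alternation can be realised by $n$ nested \emph{deterministic} loops over the (at most $|\krip|$) leaves, taking the conjunction over the $\forall$-loops and the disjunction over the $\exists$-loops. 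This uses only $n\cdot O(\log|\krip|)=O(\log|\krip|)$ bits and is insensitive to the number of alternations, which is exactly the phenomenon the theorem asserts.

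The remaining task is, for a fixed $n$-tuple of leaves, to evaluate the quantifier-free body $\psi$ on the zipped product of the chosen traces in logarithmic space. The key observation is that, since each individual trace stutters forever once it reaches its leaf, the product trace is \emph{ultimately constant}: past position $D=\max_i d_i$, where $d_i\le|\krip|$ is the depth of the $i$-th chosen leaf, all $n$ traces sit in their leaf self-loops and the letter tuple is constant. On such a word I would evaluate $\psi$ by a recursive procedure $\mathrm{Eval}(\chi,j)$ whose recursion depth is bounded by the (constant) nesting depth of $\psi$: the Boolean and $\X$ cases recurse directly; for $\chi=\chi_1\U\chi_2$ at position $j$ I scan $i=j,j+1,\dots$ up to $D$, returning \tru\ at the first $i$ with $\mathrm{Eval}(\chi_2,i)$ and \fals\ at the first $i$ with $\neg\mathrm{Eval}(\chi_1,i)$, and if the scan reaches the constant tail I resolve the until there by the simplification $\chi_1\U\chi_2\equiv\chi_2$. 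Each recursion level stores a single $O(\log|\krip|)$ scan counter, and reading a proposition $a_{\pi_i}$ at position $j$ only requires locating the state at depth $j$ on the path to the $i$-th leaf, done by walking toward the root via unique predecessors and counting, all within $O(\log|\krip|)$ space. Composing the quantifier loops with this evaluator keeps the whole algorithm in \comp{L}.

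For \comp{L}-hardness I would reduce from the order problem $\textsc{Ord}$ (deciding, for a graph that is a single directed path given by successor pointers together with two marked vertices $s,t$, whether $s$ precedes $t$), which is \comp{L}-complete under first-order reductions. Given such an instance I interpret the successor pointers \emph{directly} as the transition relation, obtaining a degenerate tree (a line) whose root is the unique source, with a self-loop added at the final vertex to meet Definition~\ref{def:kripke}, the vertex $s$ labelled by a fresh proposition $\alpha$ and $t$ by $\beta$; the fixed formula is $\forall\pi.\ \F\big(\alpha_\pi\wedge\X\F\beta_\pi\big)$. Since the unique trace visits the vertices in exactly the path order, it satisfies the formula iff $s$ strictly precedes $t$, and the reduction is a mere relabelling computable in first-order logic, in particular well below \comp{L}, so the hardness is not vacuous. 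This already establishes hardness for the single-quantifier (\LTL) fragment, and a fortiori for \MC{HyperLTL}{\mbox{tree}}.

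The main obstacle I anticipate is the logspace evaluation of $\psi$: the naive dynamic-programming table over all positions is too large, so the argument must rest on the combination of two facts, namely that the product trace is ultimately constant (bounding every $\U$-scan by $D\le|\krip|$) and that the formula is fixed (bounding the recursion depth by a constant), so that only a constant number of $O(\log|\krip|)$ counters are ever simultaneously live. A secondary point requiring care is keeping the hardness reduction weak enough: reading off the path order must be deferred to the formula's path semantics rather than performed during the reduction, since sorting the vertices into path order is itself \comp{L}-hard and would trivialise the argument.
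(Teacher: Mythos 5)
Your proposal is correct and follows essentially the same route as the paper's proof: the upper bound enumerates the (at most $|\krip|$) traces of the tree with one logarithmic counter per quantifier and evaluates the fixed quantifier-free body with a constant number of logarithmic position counters for the Until operators, and the lower bound reduces from the \comp{L-hard} problem ORD on directed line graphs using an $\F$-formula that expresses that $s$ precedes $t$. Your write-up merely fills in details the paper leaves implicit (the ultimately-constant tail of the product trace, predecessor-walking to access positions, and the weakness of the reduction), so no further comparison is needed.
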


\begin{proof}
For the upper bound, we note that the number of traces in a tree is bounded by 
the number of states, i.e., the size of the Kripke structure. The model checking algorithm 
maintains for each trace variable a counter on the number of traces, i.e., a 
logarithmic number of bits in size of the Kripke structure. To evaluate the 
inner LTL subformula, determine, in a backwards fashion, whether a subformula 
holds for a particular trace position. We need two counters on the 
length of the trace (corresponding to the variables $i$ and $j$ in the 
semantics of Until) for each Until subformula. Since the length of the trace is 
again bounded by the number of states, again a logarithmic number of bits will 
suffice. (Note that, since we are only interested in the complexity in the size 
of the Kripke structure, we consider the number of subformulas to be constant.)

The lower bound follows from the \comp{L-hardness} of 
ORD~\cite{et97}. ORD is the graph-reachability problem for directed 
line graphs. Graph reachability from $s$ to $t$ can be expressed with the 
formula $\exists \pi .\ \F (s_\pi \wedge \F t_\pi)$.
\qed
\end{proof}
 
\subsection{Acyclic Graphs}

We now turn to acyclic graphs. Acyclic Kripke structures are interesting in two 
contexts: (1) efficient storage of system trace logs in runtime verification, grouping the traces according to common prefixes 
\emph{and} common suffixes, and (2) analyzing certain security protocols, in particular
authentication algorithms, which often consist of sequences of phases 
with no repetitions or loops. Such applications result in acyclic 
Kripke structures. We develop results for three different fragments of HyperLTL: (1) the alternation-free 
fragment (Theorem~\ref{thrm:af-flat}), the bounded-alternation fragment 
(Theorem~\ref{thrm:eak-flat}), and (3) full HyperLTL 
(Corollary~\ref{cor:aestar-s}).

\subsubsection{Alternation-free Formulas}



\begin{theorem}
\label{thrm:af-flat}
\MC{AF-HyperLTL}{\mbox{acyclic}} is \comp{NL}-complete in the size of the 
Kripke structure.
\end{theorem}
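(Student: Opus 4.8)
The plan is to prove \comp{NL-completeness} by establishing membership in \comp{NL} and \comp{NL-hardness} separately, exploiting the fact that the alternation-free fragment uses only one type of quantifier.

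\paragraph*{Membership in \comp{NL}}
First I would observe that for an alternation-free formula, say of the form $\exists\pi_1\cdots\exists\pi_m.\ \psi$ where $\psi$ is an inner LTL formula over the tuple of trace variables, satisfaction amounts to guessing an assignment of traces to the $m$ quantified variables and checking that $\psi$ holds. In an acyclic Kripke structure, a trace corresponds to a maximal path that eventually gets stuck in a self-loop on a terminal state. Since the number of variables $m$ is a constant (we measure complexity in the size of the Kripke structure), I would use the nondeterministic power of \comp{NL} to guess, step by step, the tuple of current states for all $m$ traces simultaneously, advancing all of them in lock-step. At each step the algorithm records only the current state tuple (a constant number of state identifiers, hence logarithmically many bits) together with the finite control needed to track the ``progress'' of the LTL monitor for $\psi$. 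The key structural point is that because the frame is acyclic, every trace reaches its terminal self-loop within a linear number of steps, so after at most $|S|$ many rounds every trace has entered its self-loop and the joint trace becomes eventually periodic with period one on each component. At that point the LTL satisfaction of $\psi$ over the tuple can be decided by a standard on-the-fly evaluation: I would run a nondeterministic automaton for $\psi$ (of constant size, since $\psi$ is fixed) reading the letters $\bigl(L(s_1),\ldots,L(s_m)\bigr)$ produced by the lock-step traversal, and accept along an accepting run. Thus an existentially quantified formula is checked by an \comp{NL} computation; for the universal case $\forall\pi_1\cdots\forall\pi_m.\ \psi$ I would check the complement, which is an existential \comp{NL} check, and invoke $\comp{NL}=\comp{coNL}$ (Immerman--Szelepcs\'enyi) to conclude membership in \comp{NL} for the universal fragment as well.

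\paragraph*{\comp{NL-hardness}}
For the lower bound I would reduce from \comp{st}-reachability in a directed graph, the canonical \comp{NL-complete} problem. Given a directed graph with designated vertices $s$ and $t$, I would build an acyclic Kripke structure in logarithmic space by unrolling the graph by layers (the layered/time-expanded copy makes it acyclic while preserving reachability), labeling the copy of $s$ and the copy of $t$ with distinguished atomic propositions, and adding the required self-loops at terminal states. Then, exactly as in the proof of Theorem~\ref{thm:Lcomp}, reachability from $s$ to $t$ is expressed by the single-variable, alternation-free formula $\exists\pi.\ \F\,(s_\pi \wedge \F\, t_\pi)$, which lies in the alternation-free fragment. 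A path witnessing the formula exists iff $t$ is reachable from $s$ in the original graph, giving \comp{NL-hardness}.

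\paragraph*{Main obstacle}
The routine parts are the hardness reduction (which mirrors the ORD construction already used for Theorem~\ref{thm:Lcomp}) and the high-level guess-and-check structure. The step requiring the most care is the \emph{space accounting} for the membership argument: I must make sure that tracking $m$ simultaneous trace positions plus the state of the LTL monitor for $\psi$ truly fits in logarithmic space in $|S|$, which relies crucially on treating $m$ and the size of $\psi$ as constants and on using the acyclicity of the frame to bound the traversal length by $|S|$ rather than needing to detect arbitrary cycles. A secondary subtlety is handling the universal fragment cleanly: rather than determinizing, I would lean on closure of \comp{NL} under complement so that both $\exists^+$ and $\forall^+$ collapse to the same class, which is what makes the single bound \comp{NL} apply uniformly to the whole alternation-free fragment.
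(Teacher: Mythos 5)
Your proposal is correct and follows essentially the same route as the paper: your lock-step guessing of the tuple of current states is exactly an on-the-fly traversal of the paper's $k$-fold self-composition of $\krip$, using acyclicity to bound the traversal by $|S|$ steps with a logarithmic counter, and handling the $\forall^+$ fragment by complementation via $\comp{NL} = \comp{coNL}$. The only (inessential) difference is the lower bound, where the paper directly invokes \comp{NL}-hardness of reachability in ordered graphs, which are already acyclic, while you reduce from general st-reachability via a layered unrolling; both reductions use the same fixed formula $\exists \pi.\ \F (s_\pi \wedge \F t_\pi)$.
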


\begin{proof}
For the upper bound, we consider the case that the HyperLTL formula is 
existential, i.e., it is of the form
$$\exists \pi_1 \ldots \exists \pi_k .\, \varphi,$$
where $\varphi$ does not contain any trace quantifiers. For the case 
that the formula is universal, i.e., it is of the form
$$\forall \pi_1 \ldots \forall \pi_k .\, \varphi,$$
we check the formula $\exists \pi_1 \ldots \exists \pi_k .\, \neg \varphi$
and report the complemented result.

We consider the self-composition of the Kripke structure. Let $\krip = \ktuple$ 
be a  Kripke structure, and let  $\exists \pi_1 \ldots \exists \pi_k .\, 
\varphi$ be an existential HyperLTL formula. The \emph{self-composition} of 
$\krip$ is the Kripke structure $\krip' = \langle S^k,s_\init^k, \trans', L' 
\rangle$, where 
\begin{align*}
& S^k = \overbrace{S \times S \times \dots \times S}^{k~\text{times}} \\
& s_\init^k = \overbrace{(s_\init, s_\init, \dots, s_\init)}^{k \text{ times 
}}\\
& \trans' = \Big\{ \big((s_1, \ldots, s_k), (s'_1, \ldots, s'_k)\big) \mid 
\forall i \in [1, k]: (s_i, s_i') \in \trans\Big\}\\
& L'(s_1, \ldots, s_k) = \Big\{ a_i \mid \exists i \in [1, k]: a \in 
L(s_i)\Big\}.
\end{align*}
It is easy to see that the self-composition of an acyclic Kripke structure is 
again acyclic.

For the HyperLTL formula $\exists \pi_1 \ldots \exists \pi_k .\, \varphi$, let 
$\varphi'$ be the same as inner LTL formula $\varphi$, where every indexed 
proposition $a_{\pi_i}$, for some $i \in [1, k]$, is replaced by the atomic 
proposition $a_i$. Now, the Kripke structure $\krip$ satisfies $\exists \pi_1 
\ldots \exists \pi_k .\, \varphi$, iff there is a path in the self-composition 
$\krip'$, such that the corresponding trace satisfies $\varphi'$. Since the 
Kripke structure is acyclic, the length of the traces is 
bounded by the number of states of the Kripke structure. We can, therefore, 
nondeterministically guess the trace that satisfies $\varphi'$, using a counter 
with a logarithmic number of bits in the number of states of $\krip$. 

The lower bound follows from the \comp{NL-hardness} of the graph-reachability 
problem for ordered graphs~\cite{LENGAUER199263}. Ordered graphs are acyclic 
graphs with a vertex numbering that is a topological sorting of the vertices. 
As in the proof of Theorem~\ref{thm:Lcomp}, we express graph reachability from 
$s$ to $t$ with the formula $\exists \pi .\ \F (s_\pi \wedge \F t_\pi)$. \qed
\end{proof}

\subsubsection{Formulas with Bounded Alternation Depth}
$ $ 

Next, we consider formulas where the number of quantifier alternations is bounded by a 
constant $k$. We show that changing the frame structure from a tree to an 
acyclic graph results in significant increase in complexity (see 
Table~\ref{tab:system}).

\begin{theorem}
\label{thrm:eak-flat}
\MC{\mbox{(EA)}$k$\mbox{-HyperLTL}}{\mbox{acyclic}} is  
\comp{$\mathsf{\Sigma^p_{k}}$-complete} in the size of the Kripke structure. 
\MC{\mbox{(AE)}$k$\mbox{-HyperLTL}}{\mbox{acyclic}} is 
\comp{$\mathsf{\Pi^p_{k}}$-complete} in the size of the Kripke structure.

\end{theorem}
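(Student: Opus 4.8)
\emph{Upper bound and overall plan.}
The plan is to prove both claims by matching upper and lower bounds, obtaining the \comp{(AE)$k$} statement from the \comp{(EA)$k$} statement by complementation; throughout I use that in an acyclic Kripke structure every trace is a simple path closed off by the self-loop of a terminal state, hence is described by a certificate of length at most $|\States|$. A formula in \comp{(EA)$k$-HyperLTL} has the form $\exists\bar\pi_1.\,\forall\bar\pi_2\cdots Q_{k+1}\bar\pi_{k+1}.\,\psi$ with $k{+}1$ alternating blocks and quantifier-free $\psi$. I would map the first $k$ blocks directly onto the $k$ alternating quantifier blocks of $\mathsf{\Sigma^p_{k}}$: an existential block becomes an existential guess of a tuple of simple paths, a universal block a universal quantification over such tuples, each of polynomial size. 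Once the traces of blocks $1,\dots,k$ are fixed, the residual problem is the alternation-free instance $Q_{k+1}\bar\pi_{k+1}.\,\psi$ with some trace variables pinned to the supplied paths; extending the self-composition of Theorem~\ref{thrm:af-flat} to hard-wire those fixed path components, this residual instance lies in \comp{NL}, and since \comp{NL} is contained in \comp{P} it serves as the polynomial-time matrix predicate of a $\mathsf{\Sigma^p_{k}}$ computation. This is exactly why the $(k{+}1)$-st block costs nothing beyond level $k$ of the hierarchy.

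\emph{Lower bound.}
I would reduce from $\mathsf{\Sigma^p_{k}}$-complete quantified Boolean satisfiability $\Phi=\exists\bar x_1\forall\bar x_2\cdots Q_k\bar x_k.\,\beta$, taking $\beta$ in CNF when $k$ is odd and in DNF when $k$ is even, which is the $\mathsf{\Sigma^p_{k}}$-complete standard form in each case. The acyclic Kripke structure $\krip_\Phi$ contains an \emph{assignment region}: a timeline of $n$ positions, one per Boolean variable, each a true/false diamond, so that the $2^n$ simple paths enumerate all assignments using only $O(n)$ states --- precisely the exponential branching an acyclic frame permits but a tree forbids. It also contains a \emph{clause region}: from a common node one selects a clause (or disjunct) of $\beta$ and then walks a second $n$-position timeline whose positions carry that clause's literals together with the block index $b(j)$ of each variable. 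The fixed formula is $\exists\pi_1.\,\forall\pi_2\cdots Q_k\pi_k.\,Q_{k+1}\rho.\,\psi$, where $\pi_i$ ranges over the assignment region and contributes the values of the variables of block $i$ (values of other variables on $\pi_i$ are never inspected, so quantifying $\pi_i$ realizes exactly the QBF quantifier over $\bar x_i$), and the extra innermost block $Q_{k+1}\rho$ --- of type opposite to $Q_k$, hence not increasing the alternation depth $k$ --- drives the clause region. All traces advance in lockstep, so at each step $\psi$ compares, across traces, the literal $\rho$ exposes against the value held by the trace $\pi_{b(j)}$ indicated by the block label, a constant-size case split since $k$ is fixed. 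With $\rho$ universal and $\beta$ in CNF, $\psi$ asserts $\F$(the walked clause has a satisfied literal), so $\forall\rho$ expresses that every clause is satisfied; with $\rho$ existential and $\beta$ in DNF, $\psi$ asserts $\G$(every literal of the walked disjunct is satisfied), so $\exists\rho$ expresses that some disjunct holds. In either case $\krip_\Phi\models\varphi$ iff $\Phi$ is true.

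\emph{Dual and main obstacle.}
The \comp{(AE)$k$} claim follows because negating an \comp{(EA)$k$} formula swaps all quantifiers into an \comp{(AE)$k$} formula, so \MC{\mbox{(AE)}$k$\mbox{-HyperLTL}}{\mbox{acyclic}} is the complement of \MC{\mbox{(EA)}$k$\mbox{-HyperLTL}}{\mbox{acyclic}} and hence $\mathsf{\Pi^p_{k}}$-complete. I expect the lower-bound encoding to be the main obstacle, since the HyperLTL formula must be \emph{fixed} while $\beta$, its variable-to-block partition, and its clauses all vary with the instance, so everything has to be pushed into $\krip_\Phi$ and evaluated by a constant-size formula. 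The delicate points are synchronizing the assignment and clause regions so that positional cross-trace comparison is meaningful, routing each variable to the value held by the correct block-trace via the label $b(j)$, and matching the parity of the innermost block to the normal form of $\beta$ so that ``clause satisfied'' is expressible with the quantifier at hand. I would then verify correctness by induction on the quantifier prefix, checking that each HyperLTL quantifier faithfully simulates its QBF counterpart and that $\rho$ witnesses (un)satisfaction of the matrix exactly when required.
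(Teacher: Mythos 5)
Your proposal follows essentially the same route as the paper's proof, and its essentials are correct. The upper bound is the paper's argument in non-inductive form: acyclicity bounds every trace by a polynomial-size certificate, the outer $k$ blocks become the $\mathsf{\Sigma^p_k}$ quantifiers, and the innermost block is absorbed into the polynomial-time matrix via the self-composition of Theorem~\ref{thrm:af-flat} (the paper instead peels off one block at a time by induction, with the alternation-free case as the base; the two are interchangeable). The lower bound has the same architecture as the paper's: a diamond-chain assignment region whose exponentially many simple paths encode the assignments, a linear clause gadget per clause, block-index labels (your $b(j)$, the paper's $q^d$), one trace quantifier per QBF block, and one extra innermost clause quantifier evaluated by a lockstep positional comparison. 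Where you differ is in the bookkeeping, and there your version is tighter: the paper fixes a 3CNF matrix and merges its clause quantifier $\forall\pi'$ with an innermost \emph{universal} QBF block, so its written construction silently assumes one parity of $k$, whereas you switch between CNF with universal $\rho$ (odd $k$) and DNF with existential $\rho$ (even $k$), which is the correct $\mathsf{\Sigma^p_k}$-complete normal form in each case and works uniformly. You also obtain the \comp{(AE)$k$} statement by dualization of the \comp{(EA)$k$} statement rather than by the paper's second, symmetric reduction; that is simpler and equally valid.

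Two points need repair in a full write-up. First, a HyperLTL quantifier cannot be made to ``range over the assignment region'': every $\pi_i$ and $\rho$ ranges over \emph{all} traces of $\krip_\Phi$, including traces of the wrong region. You therefore need a marker proposition and guards --- exactly the role of the paper's proposition $c$ and the implication structure of $\varphi_{\map}$: a universally quantified variable that lands in the wrong region must render the formula vacuously true (guard in the antecedent), and an existentially quantified variable must be forced into the intended region (guard in the consequent, conjunctively). Without this, the reduction is false as sketched: a universally quantified $\pi_i$ instantiated by a clause trace would falsify $\psi$ even when $\Phi$ is true. Second, your parenthetical ``of type opposite to $Q_k$, hence not increasing the alternation depth $k$'' has the logic backwards: precisely \emph{because} $\rho$ is of opposite type, it raises the alternation count from $k{-}1$ (that of the $k$-block QBF prefix) to exactly $k$, which is what places the formula in \comp{(EA)$k$-HyperLTL}. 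This choice is forced, not optional: a same-type $\rho$ would yield a reduction from a $\mathsf{\Sigma^p_k}$-complete problem to \comp{(EA)$(k{-}1)$-HyperLTL} model checking, contradicting the $\mathsf{\Sigma^p_{k-1}}$ upper bound.
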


\begin{proof}
We show membership in \comp{$\mathsf{\Sigma^p_{k}}$} and 
\comp{$\mathsf{\Pi^p_{k}}$}, respectively, by induction over $k$. According to 
Theorem~\ref{thrm:af-flat}, the model checking problem for $k = 0$, where the formula is 
alternation-free, is solvable in polynomial time. For $k+1$ quantifier 
alternations, suppose that the first quantifier is existential. Since the Kripke 
structure is acyclic, the length of the traces is bounded by the number of 
states. We can thus nondeterministically guess the existentially quantified 
traces in polynomial time and then verify the correctness of the guess, by the 
induction hypothesis, in \comp{$\mathsf{\Pi^p_{k}}$}. Hence, the model checking problem for 
$k+1$ is in \comp{$\mathsf{\Sigma^p_{k+1}}$}. Likewise, if the first quantifier 
is universal, we universally guess the universally quantified traces in 
polynomial time and verify the correctness of the guess, by the induction 
hypothesis, in \comp{$\mathsf{\Sigma^p_{k}}$}. Hence, the problem of 
determining $\krip \models \varphi$ for $k+1$ alternations in $\varphi$ is in 
\comp{$\mathsf{\Pi^p_{k+1}}$}. 

For the lower bound, we show that the model checking problem for HyperLTL formula with $k$ 
alternations is \comp{$\mathsf{\Sigma^p_{k}}$-hard} and 
\comp{$\mathsf{\Pi^p_{k}}$-hard}, respectively, via a reduction from the {\em 
quantified Boolean formula} (QBF) satisfiability problem~\cite{gj79}:

\begin{quote}

{\em Given is a set of Boolean variables, $\{x_1, x_2, \dots, x_n\}$, and a 
quantified Boolean formula
$$y=\quant_1 x_1.\quant_1 x_2\dots\quant_{n-1} x_{n-1}.\quant_n x_n.(y_1 \, 
\wedge \, y_2 \, \wedge \dots \wedge \, y_m)$$
where each $\quant_i \in \{\forall, \exists\}$ ($i \in [1, n]$) and each clause 
$y_j$ ($j \in [1, m]$) is a disjunction of three literals (3CNF). Is $y$ 
true?}
 
\end{quote}
If $y$ is restricted to at most $k$ alternations of quantifiers, then QBF 
satisfiability is complete for \comp{$\mathsf{\Sigma^p_{k+1}}$} if $\quant_1 = 
\exists$, and for \comp{$\mathsf{\Pi^p_{k}}$} if $\quant_1 = \forall$. We note 
that in the given instance of the QBF problem: 

\begin{itemize}

\item The clauses may have more than three literals, but three is sufficient of 
our purpose;

\item The inner Boolean formula has to be in conjunctive normal form in order 
for our reduction to work;

\item Without loss of generality, the variables in the literals of the same 
clause are different (this can be achieved by a simple pre-processing of the 
formula), and

\item If the formula has $k$ alternations, then it has $k+1$ alternation {\em 
depths}. For example, formula
$$\forall x_1.\exists x_2. (x_1 \vee \neg x_2)$$ 
has one alternation, but two alternation depths: one for $\forall x_1$ and the 
second for $\exists x_2$. By $d(x_i)$, we mean the alternation depth of Boolean 
variable $x_i$.

\end{itemize}

We now present a mapping from an arbitrary instance of QBF with $k$ 
alternations and where $\quant_1 = \exists$ to the model checking problem of an acyclic 
Kripke structure and a HyperLTL formula with $k$ quantifier alternations. Then, 
we show that the Kripke structure satisfies the HyperLTL formula if and only if 
the answer to the QBF problem is affirmative. Figures~\ref{fig:mapping-sys} 
and~\ref{fig:qbfmodel-sys} show an example.

\noindent \textbf{Kripke structure $\krip = \ktuple$: } 

\begin{itemize}

\item {\em (Atomic propositions $\AP$)} For each alternation depth $d \in [1, 
k+1]$, we include an atomic proposition $q^d$. We furthermore include three 
atomic propositions: $c$ is used to mark the clauses, $p$ is used to force 
clauses to become true if a Boolean variable appears in a clause, and 
proposition $\bar{p}$ is used to force clauses to become true if the negation 
of a Boolean variable appears in a clause in our reduction. Thus,
$$\AP = \big\{c, p, \bar{p}\big\} \; \cup \; \big\{ q^d \mid d \in [1, 
k+1]\big\}.$$

\item {\em (Set of states $\States$)}  We now identify the members of $\States$:

\begin{itemize}

\item First, we include an initial state $\state_\init$ and a state $r_0$. 
Then, for each clause $y_j$, where $j \in [1, m]$, we include a state 
$r_j$, labeled by proposition $c$.
  
\item For each clause $y_j$, where $j \in [1, m]$, we introduce the 
following $2n$ states: 
$$\Big\{v^j_i, u^j_i \mid i \in [1, n]\Big\}.$$
Each state $v^j_i$ is labeled with propositions $q^{d(x_i)}$, and with $p$ if 
$x_i$ is a literal in $y_j$, or with $\bar{p}$ if $\neg x_i$ is a literal in 
$y_j$.

\item For each Boolean variable $x_i$, where $i \in [1, n]$, we include three 
states $s_i$, $\bar{s}_i$, and $\hat{s}_i$. Each state $s_i$ (respectively, 
$\bar{s}_i$) is labeled by $p$ and $q^{d(x_i)}$ (respectively, $\bar{p}$ and
$q^{d(x_i)}$).

\end{itemize}
Thus,
\begin{align*}
S = & \big\{s_\init \big\} \, \cup \, \big\{r_j \mid j \in [0, m]\big\}  \; 
\cup \\
& \big\{v^j_i, u^j_i, s_i, \bar{s_i}, \hat{s}_i \mid i \in [1, n] \wedge 
j \in [1,m]\big\}.
\end{align*}

\item {\em (Transition relation $\trans$)} We now identify the members of 
$\trans$:
 
\begin{itemize}

\item We include a transition $(\state_\init, r_j)$, for each $j \in [0, m]$.
 
\item We add transitions $(r_j, v^j_1)$ for each $j \in [1, m]$.

\item For each $i \in [1, n]$ and $j \in [1, m]$, we include transitions 
$(v^j_i, u^j_i)$. For each $i \in [1, n)$ and $j \in [1, m]$, we include 
transitions $(u^j_i, v^j_{i+1})$.

\item For each $i \in [1, n]$, we include transitions $(s_i, \hat{s}_i)$ and 
$(\bar{s}_i, \hat{s}_i)$. For each $i \in [1, n)$, we include transitions 
$(\hat{s}_i, s_{i+1})$ and $(\hat{s}_i, \bar{s}_{i+1})$.

\item We include two transitions $(r_0, s_1)$ and $(r_0, \bar{s}_1)$.

\item Finally, we include self-loops $(\hat{s}_n, \hat{s}_n)$ and $(u_n^j, 
u_n^j)$, for each $j \in [1, m]$.

\end{itemize}
Thus,
\begin{align*}
\trans = & \big\{(\state_\init, r_j), (r_j, v^j_1), (u_n^j, 
u_n^j) \mid j \in [0, m] \big\} \; \cup\\
& \big\{(r_0, s_1), (r_0, \bar{s}_1) \big\} \; \cup\\
& \big\{(v^j_i, u^j_i) \mid i \in [1, n] \, \wedge \, j \in [1, m] \big\} \; 
\cup\\
& \big\{(u^j_i, v^j_{i+1}) \mid  i \in [1, n) \, \wedge \, j \in [1, m] \big\} 
\; \cup\\
& \big\{(s_i, \hat{s}_i), (\bar{s}_i, \hat{s}_i) \mid i \in [1, n] \big\} \; 
\cup\\
& \big\{(\hat{s}_i, s_{i+1}), (\hat{s}_i, \bar{s}_{i+1}) \mid i \in [1, n) 
\big\}.
\end{align*}

\end{itemize}


\begin{figure*}[t]
\centering
\includegraphics[scale=.95]{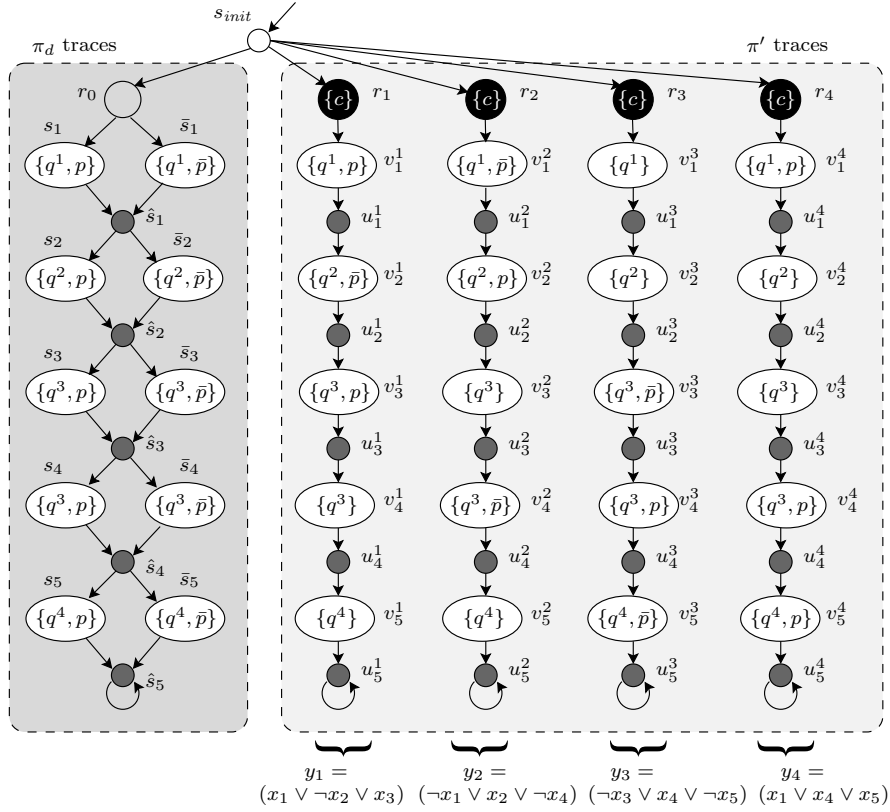}
\caption{Mapping quantified Boolean formula $y = \exists x_1.\forall 
x_2.\exists x_3.\exists x_4.\forall x_5.(x_1 \vee \neg x_2 \vee x_3) \wedge 
(\neg x_1 \vee x_2 \vee \neg x_4) \wedge (\neg x_3 \vee x_4 \vee \neg x_5) 
\wedge (x_1 \vee x_4 \vee x_5)$ to an instance of 
\MC{\mbox{(EA)}$k$\mbox{-HyperLTL}}{\mbox{acyclic}}.}
\label{fig:mapping-sys}
\end{figure*}

\newcommand{\map}{\mathit{map}}

\noindent \textbf{HyperLTL formula: } The HyperLTL formula in our mapping is 
the following:
\[ \begin{array}{l}
\label{eq:hltlsys}
\varphi_{\map} = \exists \pi_{k+1}.\forall \pi_{k} \cdots \exists \pi_2. 
\forall \pi_1. \forall \pi'. \\
\qquad  \Bigg( \bigwedge_{d \in \{1, 3, \dots, k\}} \X 
\neg c_{\pi_d} \, \wedge \, \X c_{\pi'}\Bigg) \; \Rightarrow \\
\nonumber \qquad \Bigg(\bigwedge_{d \in \{2, 4, \dots, k+1\}} \X \neg c_{\pi_d} 
 \; \wedge \\
\qquad ~~~~\F\bigg[\bigvee_{d \in [1,k+1]} \Big(\big(q^{d}_{\pi_d} 
\Leftrightarrow q_{\pi'}^d\big) \, \wedge \\
\qquad~~~~~~~~~~~~~~~ \big((p_{\pi'} \wedge p_{\pi_d}) \; 
\vee \; (\bar{p}_{\pi'} \wedge \bar{p}_{\pi_d})\big)\Big) 
\bigg]\Bigg)
\end{array}\]

\begin{figure}
\centering
\includegraphics[scale=.9]{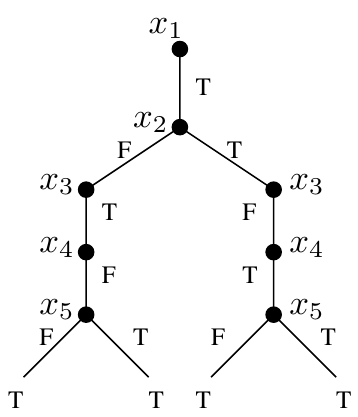}
\caption{Model for the QBF instance in Fig.~\ref{fig:mapping-sys}.}
\label{fig:qbfmodel-sys}
\end{figure}

Note that the formula has $k$ alternations. Intuitively, this formula 
expresses the following: for all the (clause) traces that are universally 
quantified (i.e., the left side of the implication), there exist (clause) 
traces, where either $p$ or $\bar{p}$ eventually matches its counterpart 
position in any trace $\pi'$. The matching positions identify the assignments 
of Boolean variables in the corresponding clauses that make the QBF instance 
true. 


We now show that the given quantified Boolean formula is $\mathit{true}$ if and 
only if the Kripke structure obtained by our mapping satisfies the HyperLTL 
formula $\varphi_\map$. 

\begin{description}

\item[($\Rightarrow$)] Suppose that $y$ is true. Then, there is an instantiation 
of existentially quantified variables for each value of universally quantified 
variables, such that each clause $y_j$, where $j \in [1, m]$ becomes true (see 
Fig.~\ref{fig:qbfmodel-sys} for an example). We now use these instantiations to 
instantiate each $\exists \pi_{x_d}$ in HyperLTL formula $\varphi_\map$, where 
$d \in \{2, 4, \dots, k+1\}$ as follows. For each existentially quantified 
variable $x_i$, where $i \in [1, n]$, in depth $d \in [1,k+1]$, if $x_i = 
\tru$, we instantiate $\pi_d$ with a trace that includes state $s_i$. 
Otherwise, the trace will include state $\bar{s}_i$. We now show that this 
trace instantiation evaluates formula $\varphi_\map$ to true. Observe that 
the left side of the implication in the formula is basically distinguishing 
traces (i.e., clause traces $\pi'$, where $\Next c$ holds and traces 
corresponding to universal variables, where $\neg \Next c$ is true). Since each 
$y_j$ is true, for any instantiation of universal quantifiers, there is at 
least one literal in $y_j$ that is true. If this literal is of the form $x_i$, 
then we have $x_i = \tru$ and trace $\pi_d$ will include $s_i$, which is 
labeled by $p$ and $q^d$. Hence, the values of $p$ (respectively, $q^d$), in 
both $\pi_d$ and $\pi'$ instantiated by trace
$$\state_{\init}r_jv^j_1\cdots u^j_n$$
are eventually equal. If the literal in $y_j$ is of the form $\neg x_i$, 
then $x_i = \fals$ and, hence, some trace $\pi_d$ will include $\bar{s}_i$. 
Again, the values of $\bar{p}$ (respectively, $q^d$), in both $\pi_d$ and 
$\pi'$ are eventually equal. Finally, since all clauses are true, all traces 
$\pi'$ reach a state where the right side of the implication becomes true.

\item[($\Leftarrow$)] Suppose our mapped Kripke structure satisfies the
HyperLTL formula $\varphi_\map$. This means that for each trace $\pi'$ of 
the form $$\state_{\init}r_jv^j_1\cdots u^j_n,$$
there exists a state $u_i^j$, where the values of $q^d$ and either $p$ or 
$\bar{p}$ are eventually equal to their counterparts in some trace $\pi_d$. If 
this trace is existentially quantified and includes $s_i$, then we assign $x_i 
= \tru$ for the preceding quantifications. If the trace includes $\bar{s}_i$, 
then $x_i = \fals$. Observe that since in no state $p$ and $\bar{p}$ are 
simultaneously true and no trace includes both $s_i$ and $\bar{s}_i$, variable 
$x_i$ will have only one truth value. This way, a model similar to 
Fig.~\ref{fig:qbfmodel-sys} can be constructed. Similar to the forward 
direction, it is straightforward to see that this valuation makes every clause 
$y_j$ of the QBF instance true.

To establish the hardness for HyperLTL formulas where the first quantifier
is universal, we analogously map an instance of QBF with $k$ alternations and 
where $\quant_1 = \forall$ to the model checking problem of an acyclic Kripke 
structure and a HyperLTL formula that also begins with a universal quantifier.
This time, the HyperLTL formula has $k+1$ quantifier alternations, because
the inner-most quantifier is universal. We have thus reduced a
\comp{$\mathsf{\Pi^p_{k+1}}$-hard} problem to the model checking problem for HyperLTL 
formulas with $k+1$ quantifier alternations where the first quantifier
is universal. Hence, the model checking problem for formulas with $k$ quantifier 
alternations where the first quantifier is universal is 
\comp{$\mathsf{\Pi^p_{k}}$-hard}.
\qed
\end{description}
\end{proof}

An important case of Theorem~\ref{thrm:eak-flat} are formulas with a single 
quantifier alternation, i.e., $k = 1$. This class of formulas contains, for 
example, generalized noninference~\cite{McLean:1994:GeneralTheory}, which can 
be expressed as a $\forall \exists$ and generalized 
non-interference~\cite{m88}, which can be expressed as a $\forall\forall 
\exists$ HyperLTL formula~\cite{cfkmrs14}. According to the polynomial 
hierarchy, the model checking problem for acyclic graphs is \comp{NP-complete} for formulas 
of the form $\exists^+\forall^+\psi$ and \comp{coNP-complete} for formulas of 
the form $\forall^+\exists^+\psi$. 

It is worth noting that the special case of a single quantifier alternation 
consisting of a single existential and a single universal quantifier 
is already \comp{NP/coNP-complete} for acyclic graphs, but still in \comp{L} for 
trees. The intuitive reason is the repeated-diamonds structure in 
Fig.~\ref{fig:mapping-sys}, which is possible in acyclic graphs, but not in 
trees. This structure allows us to select multiple Boolean values 
with a single trace quantifier. 

Finally, Theorem~\ref{thrm:eak-flat} implies that the model checking problem for acyclic 
Kripke structures and HyperLTL formulas with an arbitrary number of quantifiers 
is in \comp{PSPACE}. Moreover, its proof of lower bound shows that the problem 
is at least as hard as QBF, making it \comp{PSPACE-hard}.

\begin{corollary}
\label{cor:aestar-s}
\MC{HyperLTL}{\mbox{acyclic}} is \comp{PSPACE-complete} in the size of the 
Kripke structure.
\end{corollary}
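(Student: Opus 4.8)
The plan is to establish membership in \comp{PSPACE} and \comp{PSPACE-hardness} separately, leaning heavily on Theorem~\ref{thrm:eak-flat}. For the upper bound, I would observe that an arbitrary HyperLTL sentence has some fixed number $k$ of quantifier alternations, and that Theorem~\ref{thrm:eak-flat} places the model checking problem in \comp{$\mathsf{\Sigma^p_{k}}$} or \comp{$\mathsf{\Pi^p_{k}}$} depending on the leading quantifier. Since every level of the polynomial hierarchy is contained in \comp{PSPACE}, i.e. $\bigcup_{k} \mathsf{\Sigma^p_k} = \mathsf{PH} \subseteq \mathsf{PSPACE}$, membership follows immediately. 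The only subtlety I would flag is that Theorem~\ref{thrm:eak-flat} treats $k$ as a constant (the complexity is measured in the size of the Kripke structure), whereas for full HyperLTL the number of quantifiers is not fixed; so I would note that one cannot simply invoke the theorem as a black box but must check that the underlying alternating algorithm — guess each quantified trace nondeterministically (its length is bounded by the number of states because the structure is acyclic), alternating between existential and universal guesses according to the quantifier prefix, and finally verify the inner LTL formula on the self-composed trace — runs in polynomial \emph{space} regardless of the number of alternations. Each guessed trace uses only polynomially many bits, and the recursion depth equals the number of quantifiers, so the total space is polynomial; this gives membership in \comp{PSPACE} directly, independently of how many alternations occur.

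For the lower bound, I would reuse the reduction constructed in the proof of Theorem~\ref{thrm:eak-flat} essentially verbatim. That reduction maps an instance of QBF with $k$ alternations to an acyclic Kripke structure together with a HyperLTL formula $\varphi_{\map}$ with $k$ alternations, and establishes that the structure satisfies the formula iff the QBF instance is true. Taking $k$ to be unbounded — i.e. allowing the QBF prefix to have arbitrarily many alternations — the very same construction reduces the general QBF satisfiability problem (TQBF), which is the canonical \comp{PSPACE-complete} problem, to \MC{HyperLTL}{\mbox{acyclic}}. The key point I would verify is that the construction is genuinely uniform in $k$: the number of atomic propositions ($q^d$ for $d \in [1,k+1]$ together with $c,p,\bar p$), the states, the transitions, and the resulting formula all have size polynomial in the size of the QBF instance, so the reduction is computable in polynomial time. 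Hence the problem is \comp{PSPACE-hard}.

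The main obstacle, and the place I would spend the most care, is the interaction between the two sides of the statement regarding what is held fixed. Theorem~\ref{thrm:eak-flat} is phrased as a \emph{system}-complexity result with $k$ constant, so the reader might worry that summing over all $k$ only yields \comp{PH}, not \comp{PSPACE}, and that \comp{PH} is conjectured to be strictly smaller. The resolution — which I would make explicit — is that \comp{PSPACE} membership does not come from taking a union over the hierarchy but from the single uniform alternating (equivalently, polynomial-space) algorithm described above, whose space usage is polynomial for \emph{every} quantifier prefix simultaneously. Dually, the hardness comes from letting the alternation count grow with the input rather than fixing it. I would therefore be careful to present the corollary not as a mechanical consequence of the $k$-indexed theorem but as following from the uniformity of its membership algorithm and its reduction, and I would state explicitly that this is the content already anticipated in the remark preceding the corollary (``the problem is in \comp{PSPACE}'' and ``at least as hard as QBF''). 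Combining the polynomial-space membership algorithm with the TQBF reduction yields \comp{PSPACE-completeness}, completing the proof.
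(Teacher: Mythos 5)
Your proposal is correct and follows essentially the same route as the paper, which derives the corollary from Theorem~\ref{thrm:eak-flat}: membership via the (uniform) alternating trace-guessing algorithm, whose space usage is polynomial regardless of the number of quantifier alternations, and hardness via the QBF reduction from that theorem's lower-bound proof applied to prefixes with unboundedly many alternations. Your explicit discussion of why one cannot merely take the union over the polynomial hierarchy is a point the paper leaves implicit, but it is a clarification of the same argument rather than a different proof.
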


\section{Combined Complexity}
\label{sec:combined}

We now analyze the complexity of the model checking problem in the size of the 
\emph{combined} input, consisting of both the Kripke structure {\em and} the 
HyperLTL formula. Again, we separately focus on trees and acyclic graphs.

\subsection{Trees}

For tree-shaped Kripke structures, we first show that model checking is \emph{efficiently 
parallelizable} for two fragments: (1) the alternation-free fragment 
(Theorem~\ref{thrm:aftree-c}), and (2) formulas with one alternation consisting 
of a single universal and a single existential quantifier 
(Theorem~\ref{thrm:aetree-c}). We denote (2) as {\sf (AE/EA)-HyperLTL}. As we 
already noted, this model checking problem is particularly interesting because its 
complexity is significantly different for trees and acyclic graphs. This is 
again true for the combined complexity, which is in \comp{NC} for trees, but 
\comp{$\mathsf{\Sigma^p_{2}}$-complete} or 
\comp{$\mathsf{\Pi^p_{2}}$-complete}, depending on whether the leading 
quantifier is existential or universal, for acyclic graphs.



\begin{theorem}
\label{thrm:aftree-c}
\MC{\mbox{(A/E)}$k$\mbox{-HyperLTL}}{\mbox{tree}} is in \comp{NC}.
\end{theorem}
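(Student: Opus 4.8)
The plan is to exploit the defining feature of trees: every maximal path is a root-to-leaf path, so a tree with $n$ states has at most $n$ distinct traces. Since this fragment fixes the number $k$ of (same-type) quantifiers, there are at most $n^k$ distinct $k$-tuples of traces, a \emph{polynomial} number. I would therefore reduce the model checking problem to a polynomial collection of LTL path-checking instances and combine their outcomes with a single layer of disjunction (for $\exists^k\psi$) or conjunction (for $\forall^k\psi$), relying on the fact that LTL path checking is in \comp{NC}~\cite{kb11} and that \comp{NC} is closed under the conjunction or disjunction of polynomially many instances. This is exactly where trees differ from acyclic graphs: the self-composition of a tree is a DAG whose LTL model checking can be \comp{coNP}-hard, so I deliberately avoid the self-composition route and instead enumerate the linearly many maximal paths explicitly.

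First I would, for each leaf $\state$ of the tree, extract the label sequence of the unique root-to-$\state$ path. Following parent pointers in a tree is a functional-graph traversal, so all these paths can be computed simultaneously by pointer jumping in $O(\log n)$ depth, which lies within \comp{NC}; each trace is then made infinite by stuttering the label of its leaf, yielding an ultimately periodic word with pre-period at most the depth of the tree and period $1$. Next, for each of the $n^k$ tuples $(\pi_1,\dots,\pi_k)$ I would form the synchronous product trace whose letter at position $j$ records $L(\pi_1(j)),\dots,L(\pi_k(j))$, and replace every indexed proposition $a_{\pi_i}$ in the body by a fresh proposition $a_i$ addressing the $i$-th component, obtaining an ordinary LTL formula $\psi'$ exactly as in the proof of Theorem~\ref{thrm:af-flat}. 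The tuple satisfies the body iff this product trace satisfies $\psi'$, which is decidable in \comp{NC} by~\cite{kb11}; I then take the disjunction of these polynomially many \comp{NC} results in the existential case and their conjunction in the universal case.

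The main thing to get right will be the uniformity and depth bookkeeping: I must argue that assembling the product traces and wiring the $n^k$ path-checking circuits under one top-level gate stays within polylogarithmic depth and polynomial size, and that the lasso-shaped (eventually period-$1$) traces produced by the leaf self-loops are precisely the input format for which~\cite{kb11} provides an \comp{NC} path-checking procedure. The genuinely nontrivial ingredient is imported from~\cite{kb11}; everything else is a polynomial-size, constant-depth reduction, and its correctness rests entirely on the counting observation that a tree has only linearly many maximal paths, so that fixing $k$ keeps the number of tuples polynomial.
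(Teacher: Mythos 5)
Your proposal is correct and follows essentially the same route as the paper's proof: enumerate the polynomially many $k$-tuples of traces of the tree, check each tuple by an \comp{NC} LTL path-checking procedure, and combine the outcomes with a logarithmic-depth tree of conjunctions (universal case) or disjunctions (existential case). The additional details you work out---pointer jumping to extract the root-to-leaf paths, the synchronous product-trace encoding with fresh propositions, and the lasso shape of the stuttered traces---are elaborations of steps the paper leaves implicit rather than a different argument.
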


\begin{proof}
A decision problem is in \comp{NC}, if there exists a parallel algorithm that 
runs in time $O(\log^c n)$ with $O(n^{c'})$ processors for some constants $c$ 
and $c'$.

To verify an alternation-free formula with $k$ quantifiers, we consider all 
combinations of $k$ traces in the Kripke structure. Since $k$ is a constant and 
the number of traces is bounded by the number of states of the tree, there
is only a polynomial number of combinations. The evaluation of an individual 
combination corresponds to the evaluation of an LTL formula over a single trace,
which can be done in \comp{NC}~\cite{kf09}. We evaluate all combinations in 
parallel.  

For universal quantifiers, we then compute the conjunction over these results by 
evaluating a binary tree of conjunctions. The height of the tree is logarithmic 
in the number of combinations. Using a linear number of processors, the 
evaluation therefore is done in logarithmic time. Likewise, for existential 
quantifiers, we compute the disjunction over the results by evaluating a binary 
tree of disjunctions.
\qed
\end{proof}



\begin{theorem}
\label{thrm:aetree-c}
\MC{(AE/EA)-HyperLTL}{\mbox{tree}} is in \comp{NC}.
\end{theorem}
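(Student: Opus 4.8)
The plan is to adapt the argument of Theorem~\ref{thrm:aftree-c}, the only new ingredient being the single alternation in the quantifier prefix, which merely adds one extra level to an otherwise shallow Boolean evaluation tree. Consider the case $\varphi = \forall \pi. \exists \pi'. \psi$, where $\psi$ is quantifier-free; the dual case $\exists \pi. \forall \pi'. \psi$ is symmetric. Since $\krip$ is a tree, every trace is a root-to-leaf path extended by the self-loop at its leaf, so the number of traces is bounded by the number of leaves and hence by the number $n$ of states. Consequently there are at most $n^2$ pairs $(t, t')$ of traces, a polynomial number.

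First I would compute, for every pair $(t, t')$, a Boolean value $b_{t,t'}$ indicating whether the pair satisfies the inner formula $\psi$. To reuse the known \comp{NC} bound for \LTL path checking~\cite{kf09}, I interpret $\psi$ over the single \emph{zipped} trace obtained by pairing the $i$-th letters of $t$ and $t'$ over the product alphabet, replacing each indexed proposition $a_\pi$ (respectively $a_{\pi'}$) by the corresponding component proposition. Then $b_{t,t'}$ is exactly the outcome of \LTL path checking of this rewritten formula on the zipped trace, which lies in \comp{NC}. All $n^2$ of these checks are independent and are carried out in parallel.

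Next I would combine the $b_{t,t'}$ according to the quantifier prefix. For $\forall \pi. \exists \pi'. \psi$ the answer is $\bigwedge_{t} \bigvee_{t'} b_{t,t'}$. For each fixed $t$ the disjunction over the at most $n$ choices of $t'$ is evaluated by a balanced binary OR-tree of depth $O(\log n)$, and the outer conjunction over the at most $n$ choices of $t$ by a balanced binary AND-tree, again of depth $O(\log n)$. The two levels are evaluated one after the other, giving total depth $O(\log n)$ using a number of processors linear in the number of pairs. The dual case yields $\bigvee_{t} \bigwedge_{t'} b_{t,t'}$, handled identically with the roles of AND and OR exchanged. Composing the \comp{NC} path-checking phase with this logarithmic-depth, polynomial-width Boolean combination keeps the whole computation within \comp{NC}.

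I expect no genuine obstacle beyond the two bookkeeping points that also underlie Theorem~\ref{thrm:aftree-c}: verifying that the two-trace satisfaction relation reduces to single-trace \LTL path checking via the product encoding (so that~\cite{kf09} applies verbatim), and confirming that a single alternation contributes only one additional logarithmic-depth layer to the evaluation tree rather than stacking an unbounded number of them. The latter is precisely what separates this fixed two-quantifier fragment from the general case, where the number of quantifiers is part of the input and the resulting $n^k$ blowup would violate the \comp{NC} processor bound.
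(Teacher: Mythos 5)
Your proposal is correct and takes essentially the same approach as the paper's own proof: both evaluate all (polynomially many) pairs of traces in parallel by reducing each pair to a single-trace \LTL path-checking instance in \comp{NC}~\cite{kf09}, and then combine the results with a logarithmic-depth binary Boolean tree whose upper level realizes the outer quantifier (conjunctions for $\forall\exists$, disjunctions for $\exists\forall$) and whose lower level realizes the inner one. Your explicit zipped-trace product encoding merely spells out the reduction to single-trace path checking that the paper leaves implicit.
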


\begin{proof}
Analogously to the proof of Theorem~\ref{thrm:aftree-c}, we consider all pairs 
of traces in the Kripke structure. Since the number of traces is bounded by the 
number of states of the tree, the number of pairs is polynomial. The evaluation 
of an individual pair corresponds to the evaluation of an LTL formula over a 
single trace, which can be done in \comp{NC}~\cite{kf09}. We evaluate all pairs 
in parallel. If the formula is of the form $\forall \exists$, we then need to 
evaluate the conjunction over all first elements of the pair, and the 
disjunction over all second elements. This can be done by a binary tree,
where the upper part consists of conjunctions and the lower part consists of 
disjunctions. The height of the tree is logarithmic in the number of pairs. 
Using a linear number of processors, the evaluation is therefore done in 
logarithmic time. Likewise, if the formula is of the form $\exists \forall$, we 
compute the disjunction over the results by evaluating a binary tree, where the 
upper part consists of disjunctions and the lower part of conjunctions.
\qed  
\end{proof}

\begin{theorem}
\label{thrm:eak-tree}
\MC{(EA/AE)$k$-HyperLTL}{\mbox{tree}} is 
\comp{$\mathsf{\Sigma^p_{k+1}}$-complete} in the combined size of the Kripke 
structure and the formula, if the leading quantifier is 
existential and is \comp{$\mathsf{\Pi^p_{k+1}}$-complete} if the leading 
quantifier is universal.
\end{theorem}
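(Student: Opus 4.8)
The plan is to prove both bounds by aligning the $k+1$ quantifier blocks of the formula with the $k+1$ quantifier phases of the $(k{+}1)$-st level of the polynomial hierarchy. The crucial observation, distinguishing this result from the system-complexity analysis of Theorem~\ref{thrm:eak-flat}, is that in the \emph{combined} measure even a single (arbitrarily wide) block of quantifiers costs one alternation: the number of quantifiers is now part of the input, so an existential block is a genuine \comp{NP}-guess rather than a free polynomial enumeration. I would treat the existential-leading case, placing it in $\mathsf{\Sigma^p_{k+1}}$; the universal-leading case is entirely dual and yields $\mathsf{\Pi^p_{k+1}}$.

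\emph{Upper bound.} As in the proof of Theorem~\ref{thm:Lcomp}, in a tree both the number of traces and the length of each trace are bounded by the number of states, so each trace can be recorded as a root-to-leaf path of polynomial size. Given a formula with $k$ alternations and leading existential quantifier, written $Q_1\bar\pi_1\cdots Q_{k+1}\bar\pi_{k+1}.\,\varphi$ with quantifier blocks $\bar\pi_j$ and quantifier-free matrix $\varphi$, I would run an alternating polynomial-time procedure that, block by block, \emph{existentially} guesses the traces of the existential blocks and \emph{universally} guesses the traces of the universal blocks; each block is a single polynomial-size guess even when it binds many variables. Once all $k+1$ blocks are fixed, the assignment is concrete and it remains to check $\varphi$ on the combined (zipped) trace. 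Since every chosen trace is ultimately periodic — a polynomially long prefix followed by the self-loop at a leaf — the combined trace stabilizes to a constant letter after polynomially many steps, so evaluating $\varphi$ is decidable in deterministic polynomial time in the combined size. The computation thus has $k+1$ alternating phases over a polynomial-time base predicate and begins existentially, placing the problem in $\mathsf{\Sigma^p_{k+1}}$.

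\emph{Lower bound.} The repeated-diamond gadget of Theorem~\ref{thrm:eak-flat} is unavailable because trees forbid reconvergence, so I would instead push all of the combinatorics into the \emph{formula} over a constant-size tree. Take a $\mathsf{\Sigma^p_{k+1}}$-complete QBF $Q_1 X_1\cdots Q_{k+1}X_{k+1}.\,\psi$ with $\psi$ in CNF, leading existential and $k$ alternations. Use a fixed three-state tree: an unlabeled root $s_0$ with two leaves, one labeled $\{p\}$ and one unlabeled, each carrying a self-loop; its two traces $\tau_T,\tau_F$ encode the two truth values. Introduce one trace variable $\pi_i$ per Boolean variable $x_i$, quantified with the same quantifier and in the same block as $x_i$, so that the resulting HyperLTL formula again has exactly $k$ alternations with a leading existential quantifier. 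Finally let the matrix be $\psi$ with each positive literal $x_i$ replaced by $\X p_{\pi_i}$ and each negative literal $\neg x_i$ by $\neg\,\X p_{\pi_i}$; choosing $\pi_i=\tau_T$ (resp. $\tau_F$) models $x_i=\tru$ (resp. $x_i=\fals$), whence the tree satisfies the formula iff the QBF is true. This is a logarithmic-space reduction, establishing $\mathsf{\Sigma^p_{k+1}}$-hardness; the dual reduction from a $\mathsf{\Pi^p_{k+1}}$-complete QBF with a leading universal quantifier gives $\mathsf{\Pi^p_{k+1}}$-hardness.

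The main obstacle is conceptual rather than technical: one must justify the one-level shift relative to Theorem~\ref{thrm:eak-flat}, i.e.\ that in the combined measure a single quantifier block is itself an \comp{NP}/\comp{coNP} guess, so a formula with $k$ alternations lands at level $k+1$ of the hierarchy rather than $k$. Making this precise requires carefully matching the $k+1$ quantifier blocks to the $k+1$ phases of the alternating machine and verifying that the innermost base predicate — the LTL check over the stabilized combined trace — is genuinely polynomial. The reduction is then routine once the ``one trace quantifier per Boolean variable over a constant tree'' encoding is in place.
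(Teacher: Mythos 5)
Your proof follows essentially the same route as the paper's: the same constant two-trace tree (a root with two self-looped leaves, i.e.\ exactly the traces $\{\}\{x\}^\omega$ and $\{\}\{\}^\omega$ of the Kripke structure in Fig.~\ref{fig:npc-c}), the same one-trace-variable-per-Boolean-variable encoding that preserves the quantifier prefix, the same replacement of occurrences of $x_i$ by $\X\, p_{\pi_i}$, and the same alternating guess-and-verify upper bound, which the paper delegates to the proof of Theorem~\ref{thrm:aektree-c} for the more general acyclic case.

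There is, however, one genuine flaw in your lower bound: you require the QBF matrix $\psi$ to be in CNF. A QBF with CNF matrix is $\mathsf{\Sigma^p_{k+1}}$-complete only when the \emph{innermost} quantifier block is existential. When the innermost block is universal --- which happens for every odd $k$ in your existential-leading case, and for every even $k$ in the dual universal-leading case --- the CNF restriction collapses the problem: $\forall X.\bigwedge_j C_j$ equals $\bigwedge_j \forall X.\, C_j$, and universally quantifying the innermost variables out of a single clause is polynomial (the clause either contains complementary literals over $X$ and is identically true, or its $X$-literals can simply be deleted). Hence the innermost universal block can be eliminated in polynomial time, your source problem is then only $\mathsf{\Sigma^p_{k}}$-hard for those parities, and the reduction does not establish the claimed completeness. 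This is exactly why the paper's proof stresses that, ``in contrast to the proof of Theorem~\ref{thrm:eak-flat}, we do not assume a specific form of the Boolean formula.'' The repair costs nothing: your literal-replacement step never exploits the CNF structure, so drop the CNF assumption (or use DNF matrices for the parities where the innermost block is universal) and the rest of your reduction goes through verbatim.
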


\begin{proof}
Matching upper bounds are provided in the proof of Theorem~\ref{thrm:aektree-c} 
in the next subsection for the more general case of acyclic graphs.
We now show that the model checking problem is \comp{$\mathsf{\Sigma^p_{k}}$-hard} 
(respectively, \comp{$\mathsf{\Pi^p_{k}}$-hard}) via a reduction from QBF 
satisfiability, where the leading quantifier is existential (respectively, 
universal). In contrast to the proof of Theorem~\ref{thrm:eak-flat}, we do not 
assume a specific form of the Boolean formula.

Let the quantified Boolean formula consist of Boolean variables $\{x_1, x_2, 
\dots, x_n\}$, and a formula with $k$ alternations
$$y=\quant_1 x_1.\quant_1 x_2\dots\quant_{n-1} x_{n-1}.\quant_n x_n.\, \varphi$$
where each $\quant_i \in \{\forall, \exists\}$ ($i \in [1, n]$) and $\varphi$ is
an arbitrary Boolean formula over variables $\{x_1, \ldots, x_n\}$.
Satisfiability for QBF formulas of this type is complete for 
\comp{$\mathsf{\Sigma^p_{k+1}}$} if $\quant_1 = 
\exists$, and for $\mathsf{\Pi^p_{k+1}}$ if $\quant_1 = \forall$.

We reduce the satisfiability problem for a quantified Boolean formula to the model checking problem
for a HyperLTL formula with the same quantifier structure.

\begin{itemize}
\item \textbf{Kripke structure $\krip = \ktuple$. }  We use the
  simple Kripke structure shown in Fig.~\ref{fig:npc-c}, which
  contains two traces $\{\}\{x\}^\omega$ and $\{\}\{\}^\omega$.

\item \textbf{HyperLTL formula. } The HyperLTL formula in our mapping is 
the following:
\begin{equation}
\label{eq:hltl-com}
 \quant_1 \pi_{1}.\quant_1 \pi_{2}\dots\quant_{n-1} \pi_{{n-1}}.\quant_n 
\pi_n.\, \varphi'
\end{equation}
where $\varphi'$ is constructed from $\varphi$ by replacing every occurrence
of a variable $x_i$ in the Boolean formula with $\X x_{\pi_i}$ in the HyperLTL
formula.
\end{itemize}

\begin{figure}[t]
 \centering
 \includegraphics[scale=.95]{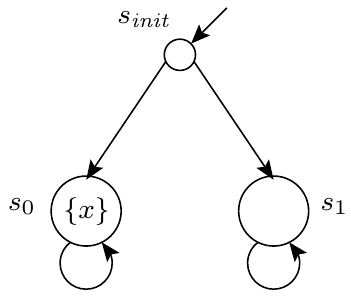}
 \caption{Kripke structure in the proof of Theorem~\ref{thrm:eak-tree}.}
 \label{fig:npc-c}
\end{figure}

The given formula is $\mathit{true}$ if and only if the Kripke structure 
obtained by our mapping satisfies HyperLTL formula~(\ref{eq:hltl-com}).
We translate every assignment to the trace quantifiers to a corresponding 
assignment of the Boolean variables, and vice versa, as follows: Assigning the 
trace $\{\}\{x\}^\omega$ to $\pi_i$ means that $x_i$ is set to $\tru$, 
and assigning the trace $\{\}\{\}^\omega$ to $\pi_i$ means that $x_i$ is set to 
$\fals$.
\qed
\end{proof}




\begin{corollary}
\label{cor:hltl-tree}
\MC{HyperLTL}{\mbox{tree}} is \comp{PSPACE-complete}.
\end{corollary}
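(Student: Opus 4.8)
The plan is to establish \comp{PSPACE-completeness} by proving the two directions separately, with the lower bound inherited directly from the QBF reduction of Theorem~\ref{thrm:eak-tree}.

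For the upper bound, I would give a direct polynomial-space procedure rather than appealing to the containment $\mathsf{PH}\subseteq\mathsf{PSPACE}$, since the number of quantifier alternations is now unbounded and tied to the length of the formula. The key observation, already used in Theorem~\ref{thm:Lcomp}, is that a tree-shaped Kripke structure $\krip$ has at most $|\States|$ distinct traces, each of length bounded by $|\States|$ (the leaf self-loops extend finite branches into ultimately periodic infinite traces). I would then evaluate the quantifier prefix by recursion on the formula: at each quantifier the procedure iterates over all candidate traces, recording the current choice with $O(\log|\States|)$ bits. Since there are at most $|\varphi|$ quantifiers, the selected trace assignment is stored in $O(|\varphi|\cdot\log|\States|)$ bits. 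Once all quantifiers are instantiated, the inner temporal formula is an \LTL formula evaluated over a fixed tuple of ultimately periodic traces, which can be checked in space polynomial in $|\varphi|$ and $|\States|$. Hence the whole computation runs in polynomial space, and the problem is in \comp{PSPACE}.

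For the lower bound, I would reuse the reduction from QBF satisfiability constructed in the proof of Theorem~\ref{thrm:eak-tree}. That reduction maps a quantified Boolean formula to the fixed two-trace tree of Fig.~\ref{fig:npc-c} together with a HyperLTL formula that mirrors the quantifier prefix of the QBF instance and replaces each Boolean variable $x_i$ by $\X x_{\pi_i}$. The construction is insensitive to the number of alternations: it produces a HyperLTL formula of size polynomial in the QBF instance regardless of how many times the quantifiers alternate. Since QBF satisfiability with an unbounded number of alternations is \comp{PSPACE-complete}, the same reduction witnesses \comp{PSPACE-hardness} of \MC{HyperLTL}{\mbox{tree}} in the combined input.

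The main obstacle is making the upper bound argument genuinely polynomial-space in the presence of unbounded alternation: one must verify that the recursion stores only a per-quantifier pointer (rather than an exponential enumeration of trace tuples) and that the base-case \LTL evaluation over the fixed, polynomially bounded trace assignment does not itself exceed polynomial space. Both facts follow from the boundedness of the trace count and trace length in a tree, but they are the points that require care, since a naive reading of the \comp{$\mathsf{\Sigma^p_{k+1}}$}/\comp{$\mathsf{\Pi^p_{k+1}}$} results of Theorem~\ref{thrm:eak-tree} would only yield membership for each fixed $k$, not a uniform \comp{PSPACE} bound.
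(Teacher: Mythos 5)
Your proposal is correct and follows essentially the same route as the paper: \comp{PSPACE-hardness} is inherited from the QBF reduction in Theorem~\ref{thrm:eak-tree}, which is insensitive to the number of alternations, and membership comes from evaluating the quantifier prefix by enumerating the polynomially many, polynomially bounded traces of the tree, which is exactly the guess-and-verify scheme of Theorems~\ref{thrm:eak-tree} and~\ref{thrm:aektree-c} extended to unboundedly many quantifiers. Your explicit deterministic recursion with a per-quantifier $O(\log|\States|)$-bit pointer is a careful uniformization of the paper's level-by-level polynomial-hierarchy argument, and it correctly handles the point the paper leaves implicit, namely that the fixed-$k$ \comp{$\mathsf{\Sigma^p_{k+1}}$}/\comp{$\mathsf{\Pi^p_{k+1}}$} bounds alone do not immediately give a uniform \comp{PSPACE} algorithm.
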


\subsection{Acyclic Kripke Structures}



For HyperLTL formulas with bounded quantifier alternation, trees and acyclic 
graphs have the same model checking complexity (except for the special case of exactly 
one universal and one existential quantifier). We match the lower bounds for 
trees from Theorem~\ref{thrm:eak-tree} with upper bounds for acyclic graphs.

\begin{theorem}
\label{thrm:aektree-c}
\MC{(EA/AE)k-HyperLTL}{\mbox{acyclic}} is 
\comp{$\mathsf{\Sigma^p_{k+1}}$-complete} in the combined size of the Kripke 
structure and the formula, if the leading quantifier is 
existential and \comp{$\mathsf{\Pi^p_{k+1}}$-complete} if the leading 
quantifier is universal.
\end{theorem}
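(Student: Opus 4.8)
The plan is to establish the matching upper bounds, since the lower bounds come for free: every tree is in particular an acyclic Kripke structure, so the $\mathsf{\Sigma^p_{k+1}}$-hardness (respectively $\mathsf{\Pi^p_{k+1}}$-hardness) established by the QBF reduction of Theorem~\ref{thrm:eak-tree} applies verbatim to acyclic graphs. It therefore remains to show membership in $\mathsf{\Sigma^p_{k+1}}$ when the leading quantifier is existential, and in $\mathsf{\Pi^p_{k+1}}$ when it is universal.

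The structural fact I would exploit is that in an acyclic Kripke structure every path from $\state_\init$ is simple until it reaches a terminal state and then stays on that state's self-loop; hence every trace is ultimately periodic of the shape $w\,\sigma^\omega$, where the non-repeating prefix $w$ has length at most $|\States|$ and $\sigma = L(\state)$ for the terminal state $\state$. Consequently a trace can be certified by writing down the corresponding simple path, a string of at most $|\States|$ state names, which is polynomial in the size of $\krip$; a verifier can check in polynomial time that a guessed path starts at $\state_\init$, respects $\trans$, and ends in a terminal state.

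With this representation in hand, I would give an alternating guess-and-verify procedure whose quantifier structure mirrors the formula. Writing the quantifier prefix as a sequence of $k+1$ maximal blocks of like quantifiers $Q_1,\dots,Q_{k+1}$, with $Q_1 = \exists$ in the existential case, the procedure guesses in phase $j$ (existentially if $Q_j = \exists$, universally if $Q_j = \forall$) a polynomial-size path for every trace variable in the $j$-th block. Since each phase contributes a guess bounded polynomially by the combined input and the number of phases is exactly $k+1$, this yields a $(k+1)$-fold alternation of polynomially bounded quantifiers followed by a deterministic check, i.e.\ membership in $\mathsf{\Sigma^p_{k+1}}$; starting the alternation with a universal block (equivalently, dualizing the formula and complementing) gives $\mathsf{\Pi^p_{k+1}}$.

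The step I expect to carry the real weight -- and the reason the combined complexity does not climb above the $(k{+}1)$-st level -- is the final deterministic check: evaluating the inner \LTL formula $\varphi$ on the guessed trace assignment must run in time polynomial in both $|\krip|$ and $|\varphi|$. Under a fixed assignment every indexed proposition $a_\pi$ reduces to a concrete membership query at each position, so $\varphi$ becomes an ordinary \LTL formula over the product of the chosen traces. Because all of these traces have reached their terminal self-loops after at most $|\States|$ steps, the tuple of letters read by $\varphi$ is eventually constant, so the product word is itself ultimately periodic of the form $\hat w\,\hat\sigma^{\omega}$ with $|\hat w|\le |\States|$; \LTL path checking over such a word runs in polynomial time by the standard bottom-up labelling of subformulas at the finitely many relevant positions. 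The main care points will be to confirm that all traces are aligned at position $0$ as the \HyperLTL semantics demands, and to count the alternations so that the guess-verify nesting lands exactly at $\mathsf{\Sigma^p_{k+1}}$ / $\mathsf{\Pi^p_{k+1}}$ (with the base case $k=0$ correctly collapsing to \comp{NP}/\comp{coNP}) rather than one level off.
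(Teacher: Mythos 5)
Your proposal is correct and matches the paper's proof in substance: the paper also inherits the lower bounds directly from the tree case of Theorem~\ref{thrm:eak-tree} (a tree being a special acyclic structure), and establishes the upper bound by guessing polynomially bounded traces block by block --- phrased there as an induction over $k$ with base case \comp{NP}/\comp{coNP}, which is just the unrolled form of your $(k{+}1)$-phase alternating guess-and-verify argument. Your explicit treatment of the final deterministic step (LTL path checking on the ultimately periodic product word $\hat w\,\hat\sigma^\omega$) spells out a detail the paper leaves implicit, but it is the same approach.
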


\begin{proof}
We show membership in \comp{$\mathsf{\Sigma^p_{k+1}}$} and 
\comp{$\mathsf{\Pi^p_{k+1}}$}, respectively, by induction over $k$. For the 
base case, $k = 0$, where  the formula is alternation-free, the model checking problem can 
be solved in \comp{NP} and \comp{co-NP}, respectively, as follows. If the 
quantifiers are existential, we can nondeterministically guess a combination of 
the traces and verify the correctness of the guess in polynomial time, as the 
length of each trace is bounded by the number of states. Likewise, if the 
quantifiers are universal, we can universally guess a combination of the traces 
and verify the correctness of the guess in polynomial time.

For $k+1$ quantifier alternations, suppose that the first quantifier is
existential. Since the Kripke structure is acyclic, the length of the traces is 
bounded by the number of states. We can thus nondeterministically guess the 
existentially quantified traces in polynomial time and verify the correctness 
of the guess, by the induction hypothesis, in \comp{$\mathsf{\Pi^p_{k+1}}$}. 
Hence, the model checking problem for $k+1$ is in \comp{$\mathsf{\Sigma^p_{k+2}}$}. 
Likewise, if the first quantifier is universal, we universally guess the 
universally quantified traces in polynomial time and verify the correctness of 
the guess, by the induction hypothesis, in \comp{$\mathsf{\Sigma^p_{k+1}}$}. 
Hence, the model checking problem for $k+1$ is in \comp{$\mathsf{\Pi^p_{k+2}}$}. 

Together with the lower bounds for trees in Theorem~\ref{thrm:eak-tree}, we 
obtain \comp{$\mathsf{\Sigma^p_{k+1}}$/$\mathsf{\Pi^p_{k+1}}$-completeness} for 
$k$ quantifier alternations.

\qed
\end{proof}


\begin{corollary}
\label{cor:hltl-acyc}
\MC{HyperLTL}{\mbox{acyclic}} is \comp{PSPACE-complete}.
\end{corollary}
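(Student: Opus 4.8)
The plan is to prove matching upper and lower bounds. For \comp{PSPACE}-hardness, I would observe that every tree is a special case of an acyclic Kripke structure, so the \comp{PSPACE}-hardness established for trees in Corollary~\ref{cor:hltl-tree} transfers immediately. Concretely, the reduction behind Theorem~\ref{thrm:eak-tree} maps a quantified Boolean formula with an \emph{arbitrary} number of alternations to a model checking instance over the two-trace Kripke structure of Fig.~\ref{fig:npc-c}, which is acyclic; since the full (unbounded-alternation) QBF validity problem is \comp{PSPACE-complete} and the reduction preserves the quantifier structure, \MC{HyperLTL}{\mbox{acyclic}} inherits \comp{PSPACE}-hardness.

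For membership, the crucial point is that we may \emph{not} simply combine $\mathsf{PH}\subseteq\comp{PSPACE}$ with Theorem~\ref{thrm:aektree-c}: once the formula is part of the input its alternation depth is unbounded, so the instance need not lie in any fixed level of the polynomial hierarchy. Instead I would give a direct recursive procedure that runs in polynomial space. The algorithm peels off the quantifier prefix one quantifier at a time, maintaining a trace assignment $\Pi$; for an existential (respectively, universal) quantifier it enumerates the traces of $\krip$ one at a time, recursing on the remaining formula, and returns true if some (respectively, every) branch succeeds. Because $\krip$ is acyclic, each trace reaches a self-looping terminal state after at most $|\States|$ steps and is therefore ultimately periodic with a linear finite prefix, so a single trace costs only polynomially many bits and the candidate traces for a given quantifier can be generated successively without materializing all of them.

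The space analysis is then the heart of the argument. The recursion depth equals the number of quantifiers, which is at most $|\varphi|$ and hence polynomial, and each frame stores one current trace together with an enumeration cursor, costing polynomial space; thus the entire call stack uses polynomial space. When every quantifier has been instantiated, the base case evaluates the quantifier-free LTL body over the fixed assignment $\Pi$: combining the fixed traces into a single ultimately-periodic word over the product alphabet reduces this to ordinary LTL path checking, which is polynomial (indeed in \comp{NC}~\cite{kf09}, and in particular uses polynomial space, as in Theorem~\ref{thrm:af-flat}). Summing the polynomial stack space and the polynomial base-case space yields an overall polynomial space bound, establishing membership in \comp{PSPACE} and hence \comp{PSPACE}-completeness.

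I expect the upper bound to be the main obstacle, precisely because one must resist reading the result off from the polynomial-hierarchy classification of Theorem~\ref{thrm:aektree-c} and instead verify that the alternating guess-and-verify recursion can be implemented so that the product of the \emph{input-dependent} recursion depth and the per-frame space stays polynomial. The key enabling observation is acyclicity, which caps each trace length by $|\States|$ and lets every quantifier block enumerate its candidate traces sequentially rather than all at once.
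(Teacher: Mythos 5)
Your proposal is correct and follows essentially the same route as the paper: \comp{PSPACE}-hardness via the quantifier-structure-preserving QBF reduction of Theorem~\ref{thrm:eak-tree} (whose fixed two-trace Kripke structure is a tree, hence acyclic), and membership via the trace-enumeration recursion underlying Theorem~\ref{thrm:aektree-c}, which acyclicity keeps within polynomial space. Your explicit uniform recursive algorithm, together with the remark that one cannot simply invoke $\mathsf{PH}\subseteq\comp{PSPACE}$ level-by-level when the alternation depth is part of the input, merely spells out carefully what the paper leaves implicit in deriving the corollary from those two theorems.
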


\section{Related Work}
\label{sec:related}

Model checking algorithms for HyperLTL were introduced 
in~\cite{frs15}. The satisfiability problem for HyperLTL was shown to be 
decidable for the $\exists^*\forall^*$ 
fragment~\cite{fh16}. Runtime verification algorithms 
for HyperLTL include both automata-based algorithms~\cite{ab16,fhst17} and 
rewriting-based algorithms~\cite{bsb17}. HyperLTL is also supported by a growing 
set of tools, including the model checker MCHyper~\cite{frs15}, and the decision 
procedure EAHyper~\cite{fhs17}, and the runtime monitoring tool RVHyper~\cite{10.1007/978-3-319-89963-3_11}.

A study of the impact of structural restrictions on the complexity of
the model checking problem, similar to this paper, has been carried out
for LTL~\cite{kb11}.  The LTL model checking problem is \comp{PSPACE-hard} 
if there exists a strongly connected component with two distinct cycles in the 
Kripke structure. If no such component exists, then the 
model checking problem is in \comp{coNP}. For the special case
of finite paths and trees, the LTL model checking problem is
in \comp{NC}, or, more precisely, in \comp{AC$^1$(logDCFL)}~\cite{kf09,lars}.


\section{Conclusion}
\label{sec:conclusion}

We have developed a detailed and fundamental classification of the complexity 
of the model checking  problem for hyperproperties expressed in 
HyperLTL over trace logs that are stored as tree-shaped or acyclic Kripke 
structures. The complexity analysis is a crucial 
step for the development of runtime monitors, because in runtime verification methods for hyperproperties, the traces generated over time by the running system have to be 
stored into a growing data structure. This is a fundamental difference to monitoring techniques for standard trace properties, where the traces are evaluated individually and the monitors 
are usually memoryless.

We showed that for trees, the model checking complexity in the size of the Kripke structure 
is \comp{L-complete} independently of the number of quantifier alternations. For 
acyclic Kripke structures, the complexity is in \comp{PSPACE} (in the level of 
the polynomial hierarchy that corresponds to the number of quantifier 
alternations). The combined complexity in the size of the Kripke structure and 
the length of the HyperLTL formula is in \comp{PSPACE} for both trees and 
acyclic Kripke structures, and is as low as \comp{NC} for the relevant case of 
trees and alternation-free HyperLTL formulas.

These results highlight two crucial design choices for monitoring algorithms:

\begin{itemize}

\item The substantial differences between the complexities reported in 
Tables~\ref{tab:system} and~\ref{tab:combined}, in particular the contrast to 
the non-elementary complexity of the model checking problem for general graphs, 
are intriguing. These results suggest that non-exhaustive techniques such as 
runtime verification, that work on restricted structures, may have a significant complexity 
advantage over static verification. 

\item In the context of runtime verification, our results in Tables~\ref{tab:system} 
and~\ref{tab:combined} clearly show the tradeoffs in deploying runtime verification technology in 
practice. First, note that for runtime verification, the size of the formula is expected to 
remain constant and, hence, what matters is the size of the Kripke 
structure. Tables~\ref{tab:system} shows that the model checking complexity remains the 
same for trees, while it grows significantly for acyclic structures. This 
justifies careful space vs. time considerations in practical settings.

\end{itemize}

Our study raises many open questions for future work. An immediate question 
left unanswered in this paper is the precise complexity for trees and 
alternation-free HyperLTL formulas within \comp{NC}. Next, it would be 
interesting to determine the complexity of the verification problem for further 
restricted structures such as flat graphs, i.e., graphs that have no nested 
cycles. Also, there are many extensions of HyperLTL, such as the 
branching-time logic HyperCTL$^*$~\cite{cfkmrs14} and the first-order extension
FOHLTL~\cite{FinkbeinerMSZ-CCS17}.  It would be very interesting to see if the 
differences we observed for HyperLTL carry over to these much more expressive 
logics. And, finally, we are currently working on designing runtime verification techniques that 
can reuse the result of past verification steps as the size of the Kripke structure grows.

\paragraph{Acknowledgements}

This work was partially supported by Canada  NSERC
Discovery  Grant  418396-2012, by NSERC  Strategic  Grants
430575-2012 and 463324-2014, by the German Research Foundation (DFG) as part of the Collaborative Research Center ``Methods and Tools for Understanding and Controlling Privacy'' (SFB 1223), and by the European Research Council (ERC) Grant OSARES (No.\ 683300).

\bibliographystyle{IEEEtran}
\bibliography{bibliography}

\newcommand{\SortNoop}[1]{}
\begin{thebibliography}{10}
\providecommand{\url}[1]{#1}
\csname url@samestyle\endcsname
\providecommand{\newblock}{\relax}
\providecommand{\bibinfo}[2]{#2}
\providecommand{\BIBentrySTDinterwordspacing}{\spaceskip=0pt\relax}
\providecommand{\BIBentryALTinterwordstretchfactor}{4}
\providecommand{\BIBentryALTinterwordspacing}{\spaceskip=\fontdimen2\font plus
\BIBentryALTinterwordstretchfactor\fontdimen3\font minus
  \fontdimen4\font\relax}
\providecommand{\BIBforeignlanguage}[2]{{%
\expandafter\ifx\csname l@#1\endcsname\relax
\typeout{** WARNING: IEEEtran.bst: No hyphenation pattern has been}%
\typeout{** loaded for the language `#1'. Using the pattern for}%
\typeout{** the default language instead.}%
\else
\language=\csname l@#1\endcsname
\fi
#2}}
\providecommand{\BIBdecl}{\relax}
\BIBdecl

\bibitem{zm03}
S.~Zdancewic and A.~C. Myers, ``Observational determinism for concurrent
  program security,'' in \emph{Proceedings of the 16th {IEEE} Computer Security
  Foundations Workshop (CSFW)}, 2003, p.~29.

\bibitem{cs10}
M.~R. Clarkson and F.~B. Schneider, ``Hyperproperties,'' \emph{Journal of
  Computer Security}, vol.~18, no.~6, pp. 1157--1210, 2010.

\bibitem{cfkmrs14}
M.~R. Clarkson, B.~Finkbeiner, M.~Koleini, K.~K. Micinski, M.~N. Rabe, and
  C.~S{\'{a}}nchez, ``Temporal logics for hyperproperties,'' in
  \emph{Proceedings of the 3rd Conference on Principles of Security and Trust
  ({POST})}, 2014, pp. 265--284.

\bibitem{p77}
A.~Pnueli, ``The temporal logic of programs,'' in \emph{Symposium on
  Foundations of Computer Science (FOCS)}, 1977, pp. 46--57.

\bibitem{gh01}
D.~Giannakopoulou and K.~Havelund, ``{A}utomata-{B}ased {V}erification of
  {T}emporal {P}roperties on {R}unning {P}rograms,'' in \emph{Automated
  Software Engineering (ASE)}, 2001, pp. 412--416.

\bibitem{klsss02}
M.~Kim, I.~Lee, U.~Sammapun, J.~Shin, and O.~Sokolsky, ``{M}onitoring,
  {C}hecking, and {S}teering of {R}eal-{T}ime {S}ystems,'' \emph{Electronic.
  Notes in Theoretical Computer Science}, vol.~70, no.~4, 2002.

\bibitem{10.1007/978-3-642-04694-0_5}
B.~Finkbeiner and L.~Kuhtz, ``Monitor circuits for {LTL} with bounded and
  unbounded future,'' in \emph{Runtime Verification}, 2009, pp. 60--75.

\bibitem{bls11}
A.~Bauer, M.~Leucker, and C.~Schallhart, ``{R}untime {V}erification for {LTL}
  and {TLTL},'' \emph{ACM Transactions on Software Engineering and Methodology
  (TOSEM)}, vol.~20, no.~4, pp. 14:1--14:64, 2011.

\bibitem{ab16}
S.~Agrawal and B.~Bonakdarpour, ``Runtime verification of $k$-safety
  hyperproperties in {H}yper{LTL},'' in \emph{Proceedings of the {IEEE} 29th
  Computer Security Foundations (CSF)}, 2016, pp. 239--252.

\bibitem{bsb17}
N.~Brett, U.~Siddique, and B.~Bonakdarpour, ``Rewriting-based runtime
  verification for alternation-free {H}yper{LTL},'' in \emph{Proceedings of the
  23rd International Conference on Tools and Algorithms for the Construction
  and Analysis of Systems (TACAS)}, 2017, pp. 77--93.

\bibitem{fhst17}
B.~Finkbeiner, C.~Hahn, M.~Stenger, and L.~Tentrup, ``Monitoring
  hyperproperties,'' in \emph{Proceedings of the 17th International Conference
  on Runtime Verification}, 2017, pp. 190--207.

\bibitem{kb11}
L.~Kuhtz and B.~Finkbeiner, ``Weak kripke structures and {LTL},'' in
  \emph{Proceedings of the 22nd International Conference on Concurrency Theory
  (CONCUR)}, 2011, pp. 419--433.

\bibitem{frs15}
B.~Finkbeiner, M.~N. Rabe, and C.~S{\'{a}}nchez, ``Algorithms for model
  checking {H}yper{LTL} and {H}yper{CTL}*,'' in \emph{Proceedings of the 27th
  International Conference on Computer Aided Verification (CAV)}, 2015, pp.
  30--48.

\bibitem{markus}
M.~N. Rabe, ``A temporal logic approach to information-flow control,'' Ph.D.
  dissertation, Saarland University, 2016.

\bibitem{fh16}
B.~Finkbeiner and C.~Hahn, ``Deciding hyperproperties,'' in \emph{Proceedings
  of the 27th International Conference on Concurrency Theory (CONCUR)}, 2016,
  pp. 13:1--13:14.

\bibitem{gm82}
J.~A. Goguen and J.~Meseguer, ``Security policies and security models,'' in
  \emph{Proceedings of the {IEEE} Symposium on Security and Privacy (S \& P)},
  1982, pp. 11--20.

\bibitem{m88}
D.~McCullough, ``Noninterference and the composability of security
  properties,'' in \emph{Proceedings of the 1988 {IEEE} Symposium on Security
  and Privacy (S \& P)}, 1988, pp. 177--186.

\bibitem{et97}
K.~Etessami, ``Counting quantifiers, successor relations, and logarithmic
  space,'' \emph{Journal of Compuer and System Sciences}, vol.~54, no.~3, pp.
  400--411, 1997.

\bibitem{LENGAUER199263}
\BIBentryALTinterwordspacing
T.~Lengauer and K.~Wagner, ``The correlation between the complexities of the
  nonhierarchical and hierarchical versions of graph problems,'' \emph{Journal
  of Computer and System Sciences}, vol.~44, no.~1, pp. 63 -- 93, 1992.
  [Online]. Available:
  \url{http://www.sciencedirect.com/science/article/pii/0022000092900043}
\BIBentrySTDinterwordspacing

\bibitem{gj79}
M.~Garey and D.~Johnson, \emph{Computers and Intractability: A Guide to the
  Theory of {NP}-Completeness}.\hskip 1em plus 0.5em minus 0.4em\relax New
  York: W. H. Freeman, 1979.

\bibitem{McLean:1994:GeneralTheory}
J.~McLean, ``A general theory of composition for trace sets closed under
  selective interleaving functions,'' in \emph{Proceedings of the IEEE
  Symposium on Security and Privacy (S \& P)}, Apr. 1994, pp. 79--93.

\bibitem{kf09}
L.~Kuhtz and B.~Finkbeiner, ``{LTL} path checking is efficiently
  parallelizable,'' in \emph{Proceedings of the 36th International Colloquium
  on Automata, Languages and Programming (ICALP 2009)}, 2009, pp. 235--246.

\bibitem{fhs17}
B.~Finkbeiner, C.~Hahn, and M.~Stenger, ``{EAH}yper: Satisfiability,
  implication, and equivalence checking of hyperproperties,'' in
  \emph{Proceedings of the 29th International Conference on Computer Aided
  Verification (CAV)}, 2017, pp. 564--570.

\bibitem{10.1007/978-3-319-89963-3_11}
B.~Finkbeiner, C.~Hahn, M.~Stenger, and L.~Tentrup, ``{RVHyper}: A runtime
  verification tool for temporal hyperproperties,'' in \emph{Tools and
  Algorithms for the Construction and Analysis of Systems (TACAS)}, 2018, pp.
  194--200.

\bibitem{lars}
L.~Kuhtz, ``Model checking finite paths and trees,'' Ph.D. dissertation,
  Saarland University, 2010.

\bibitem{FinkbeinerMSZ-CCS17}
B.~Finkbeiner, C.~M\"uller, H.~Seidl, and E.~Zalinescu, ``Verifying {S}ecurity
  {P}olicies in {M}ulti-agent {W}orkflows with {L}oops,'' in \emph{Proceedings
  of the 15th {ACM} Conference on Computer and Communications Security (CCS)},
  2017, pp. 633--645.

\end{thebibliography}

\end{document}